\newif\ifels
\newif\iflipics\lipicstrue
\newcommand{\semi}{\,,\,}
\renewcommand{\S}{\texttt{S}}
\newcommand{\bigfract}[2]{\frac{^{\textstyle #1}}{_{\textstyle #2}}}
\newcommand{\rulename}[1]{{\small {\sc[#1]}}}
\newcommand{\rulenamex}[1]{\mbox{\tiny [{\sc #1}]}}
\def \mathrule #1#2#3{	\begin{array}{l} 
                       	\rulenamex{#1}
                       	\\ 
                      	 \bigfract{#2}{#3}	
                       	\end{array}
					 }
\def \mathax #1#2{\begin{array}{l} 
                  \rulenamex{#1} 
                  \\ 
                  #2
                  \end{array}
                  }
\newcommand{\C}{{\tt C}}
\newcommand{\Q}{{\tt Q}}
\newcommand{\Qwithat}{\mbox{{\tt @}}\Q}
\newcommand{\withat}{\mbox{{\tt @}}}
\newcommand{\lolli}{\multimap}
\newcommand{\event}{\,\mbox{{\tt >\hspace{-2.1pt}>}}\,}
\newcommand{\tostate}{\,\mbox{{\tt =\hspace{-2.5pt}>}}\,}
\newcommand{\A}{{\tt A}}
\newcommand{\SemEvent}{\Psi}
\renewcommand{\time}{{\tt k}}
\newcommand{\now}{{\tt now}}
\newcommand{\timexpr}{\mbox{\normalsize {\tt t}}} 
\newcommand{\linecode}[2]{{\tt LC}_{#1}(#2)}
\newcommand{\ev}{{\tt ev}}
\newcommand{\miniStipula}{\mbox{{\sf \emph{$\mu$Stipula}}}}
\newcommand{\miniStipulaTA}{\mbox{{\sf \emph{$\mu$Stipula}}$^{\tt TA}$}}
\newcommand{\miniStipulaI}{\mbox{{\sf \emph{$\mu$Stipula}}$^{\tt I}$}}
\newcommand{\miniStipulaD}{\mbox{{\sf \emph{$\mu$Stipula}}$^{\tt D}$}}
\newcommand{\miniStipulaDI}{\mbox{{\sf \emph{$\mu$Stipula}}$^{\tt DI}$}}
\newcommand{\miniStipulaDIP}{\mbox{{\sf \emph{$\mu$Stipula}}$^{\tt DI}_{\mbox{\tt +}}$}}
\newcommand{\InitEv}{\mathtt{InitEv}}
\newcommand{\Stipula}{\mbox{{\sf \emph{Stipula}}}}
\newcommand{\X}{\mathtt{X}}
\newcommand{\B}{{\tt B}}
\newcommand{\contract}{\mathbb{C}}
\newcommand{\contractin}{\mathbb{C}_{\tt init}}
\newcommand{\TS}{\mathbb{TS}}
\newcommand{\TStp}{\mathbb{TS}_{\tt tp}}
\newcommand{\clause}[3]{#1 {\cdot} #2 {\cdot} #3}
\newcommand{\zero}{\mbox{\raisebox{-.6ex}{{\tt -\!\!\!-}}}}
\newcommand{\sem}[1]{\llbracket#1\rrbracket}
\newcommand{\curvedsembis}[1]{}
\newcommand{\R}{\mathtt{R}}
\newcommand{\f}{{\tt f}}
\newcommand{\g}{{\tt g}}
\newcommand{\Inc}{\textit{Inc}}
\newcommand{\DecJump}{\textit{DecJump}}
\newcommand{\lred}[1]{\mathrel{\stackrel{#1}{\longrightarrow}}}
\newcommand{\lredTickP}[1]{\mathrel{\stackrel{#1}{\longrightarrow}_{\tt tp}}}
\newcommand{\nolredTickP}[1]{\mathrel{\stackrel{#1}{\nrightarrow}_{\tt tp}}}
\newcommand{\noredbis}[1]{#1 \mathrel{\nrightarrow}}
\newcommand{\xred}[1]
{ \setbox0=\hbox{$\, {}^{#1}\, $}
  \setbox1=\hbox{$\longrightarrow$}
  \loop\setbox1=\hbox{$-$\kern-0.3em\unhbox1}\ifdim\wd1<\wd0\repeat
  \hbox{$\ \mathop{\box1}\limits^{#1} \ $}
}
\newcommand{\Pred}{\mathit{Pred}}
\newcommand{\n}{{\tt n}}
\newcommand{\eqdef}{\stackrel{\textsf{\tiny def}}{=}}
\newcommand{\pairbis}[2]{#1}
\newcommand{\Reviewer}[2]{}
\newcommand{\Answer}[1]{}
\newtheorem{definition}{Definition}
\newtheorem{corollary}{Corollary}
\newtheorem{theorem}{Theorem}
\newtheorem{remark}{Remark}
\newproof{proof}{Proof}
\title{
Decidability Problems for Micro-Stipula
}
\author{G. Delzanno\inst{1} \and C. Laneve\inst{2} \and A. Sangnier\inst{1} \and  G. Zavattaro\inst{2}}
\institute{
DIBRIS, University of Genova, Italy
\and
DISI, University of Bologna, Italy
}
\authorrunning{Delzanno et al.}
\begin{document}

\maketitle

\begin{abstract} 
{\sf \emph{Micro-}}{\Stipula} is a stateful calculus in which clauses can be 
activated either through interactions with the external environment or by the evaluation 
of time expressions. Despite the apparent simplicity of its syntax and operational model, the combination of state evolution, time reasoning, and nondeterminism gives rise to significant analytical challenges. In particular, we show that determining whether a 
clause is never executed is undecidable. We formally prove that this undecidability 
result holds even for syntactically restricted fragments: namely, the time-ahead 
fragment, where all time expressions are strictly positive, and the instantaneous 
fragment, where all time expressions evaluate to zero. On the other hand, we identify 
a decidable subfragment: within the instantaneous fragment, reachability becomes 
decidable when the initial states of functions and events are disjoint.
%
%
\end{abstract}
  
\section{Introduction}
\label{sec:introduction}

{\sf \emph{Micro-}}{\Stipula}, noted {\miniStipula}, is a basic calculus defining \emph{contracts},
namely sets of clauses that are either (\emph{a}) \emph{parameterless functions}, to be
invoked by the external environment, or (\emph{b}) \emph{events} that are triggered at 
given times. The calculus has been devised to study the presence of clauses
in legal contracts written in {\Stipula}~\cite{CrafaL20,Stipula} that can never be applied 
because of unreachable circumstances or of wrong time constraints -- so-called \emph{unreachable clauses}.
In the legal contract domain, removing such clauses 
when the contract is drawn up is
substantial because they might be considered too oppressive by parties  and
make the legal relationship fail.

While dropping unreachable code is a very common optimization 
in compiler construction of programming languages~\cite{AhoUllman,Appel},
the presence of time expressions in {\miniStipula} events makes the optimization complex.
In particular, 
when the time expressions are logically inconsistent
with the contract behaviour,  the corresponding event (and its continuation) becomes
unreachable. 

In~\cite{Laneve04} we have defined an analyzer that uses symbolic expressions to approximate 
time expressions at static-time and that computes the set of reachable clauses by means of a 
closure operation based on a fixpoint technique. The analyzer, whose 
prototype is at~\cite{stipulaanalyzer}, is sound (every clause it spots is unreachable) but not 
complete (there may be unreachable clauses that are not recognized). 
The definition of a complete algorithm for unreachable clauses was left as an open problem.

In this paper  we study the foregoing problem for three fragments of {\miniStipula} that are 
used to model distinctive elements of legal contracts:

\Reviewer{1}{four fragments of MicroStipula are considered, which itself is a fragment of Stipula. It is unclear how these four fragments have been chosen. They are unmotivated and it is unclear how they relate to actual legal cases in practice. }
\Answer{We have motivated the fragments. The appendix (that is not part of the conference
paper) also reports more detailed motivations.}


\begin{description}
\item[$\miniStipulaTA: $] time expressions are always positive -- the corresponding events allow one to
define \emph{future obligations} such as
the power to exercise an option may only last until a deadline is met.
For example, a standard clause in a legal contract for renting a device is
\begin{itemize}
\item[--] \emph{The Borrower shall return the device {\tt k}
hours after the
rental and will pay Euro {\tt cost} in advance where half of the amount is of surcharge for late return. }
\end{itemize}
This clause is transposed in {\miniStipula} with an event like
\begin{center}
{\small {\tt \begin{tabular}{l}
      now + k $\event$ @Using \{
\\
\qquad  cost $\lolli$ Lender 
\\
\} $\tostate$ @End
\end{tabular}}}
\end{center}
which is affirming that the money {\tt cost} is sent to 
 the Lender (the operation $\lolli$) if the Borrower is {\tt Using} the device
when the deadline expires. 

\item[$\miniStipulaI: $] 
time expressions are always {\now} -- the corresponding events permit to define \emph{judicial enforcements} such as the \emph{immediate} activation of a dispute resolution mechanism
by an authority when a party challenges the content or the execution of the contract.
For example, a contract for renting a device might also contain a clause for 
resolution of problems:
\begin{itemize}
\item[--]
\emph{If the Lender detects a problem, he may trigger a resolution process that is 
managed by 
the Authority. The Authority will immediately enforce resolution to either the Lender or the Borrower.
}
\end{itemize}
In this case, the {\miniStipula} clause is the event
\begin{center}
{\small {\tt \begin{tabular}{l}
now $\event$ @Problem \{  
\\
\qquad \textcolor{olive}{ // immediately enforce resolution to Lender or
Borrower}
\\
\} $\tostate$ @Solved
\end{tabular}}}
\end{center}
where the \emph{immediate resolution} is implemented by a time expression {\tt now}. 
In this case, if the Lender and the Borrower have not yet resolved the issue -- the contract is in a 
state {\tt Problem} -- then the Authority enforces a resolution by, for example,
sending a part of {\tt cost} to Lender and the remaining part to Borrower. 

\item[$\miniStipulaD: $] functions and events do not have initial states
in common; events in this fragment allow one to
model \emph{exceptional behaviours} that must be performed \emph{before} any party may invoke a function. 
This type of restriction is quite common in legal contracts, 
particularly when formalizing a clause that outlines the consequences of a party breaching a condition.
Section~\ref{sec:Section2} contains a legal contract written in {\miniStipulaD}. 
\end{description}

We demonstrate that, even for the aforesaid fragments of {\miniStipula},
there is no complete algorithm for determining unreachable clauses. 
The proof technique is based on reducing the 
unreachability problem to the halting problem for Minsky machines (finite state machines
with two registers)~\cite{Minsky}. The 
{\miniStipula} encodings 
of Minsky machines model registers' values with multiplicities of events and extensively use the 
features that (\emph{a}) events preempt both the invocation of
functions and the progression of time and (\emph{b}) events that are not executed in the
current time slot are garbage-collected when the time progresses.
The encoding of {\miniStipulaTA} is particularly complex 
because we need to decouple in two different time slots the events corresponding to the two registers and
recreate them when the time progresses. 

We then restrict to {\miniStipulaDI}, a fragment of {\miniStipula} that is the intersection of {\miniStipulaI}
and {\miniStipulaD}, and demonstrate that the corresponding contracts are an instance of 
well-structured transition systems~\cite{Finkel:2001},
whereby the reachability problem is decidable. 
To achieve this result, we had to modify the semantics of {\miniStipula} by restricting the application 
of the time progression to states in which functions can be invoked (thus it is
disabled in the states where events are executed). 
Hereafter, the correspondence between the models using the two different progression rules has been 
analyzed to demonstrate reachability results for {\miniStipulaDI}.
\Reviewer{3}{
For the fragment of Micro-Stipula that enjoys decidability, additional motivations would be particularly useful: the more the fragment is capable of expressing a large class of legal contracts, the more useful the decidability result is. Regarding the last question, it would be interesting to know whether decidability can be lifted to the corresponding fragment of Stipula: if not, can the authors still argue for the practical relevance of their result? Finally, it would be useful to assess the real-world relevance of the problem of unreachable clauses within legal contracts.}
\Answer{
The breadth of legal contracts is indeed impressive and we can found examples motivating 
{\miniStipulaDI}. The second contract in the Appendix actually conforms to the 
constraints of this fragment. However, we find hard to motivate every fragment with 
the page limit we have. So, we think that it is reasonable to include longer motivations 
in the full paper.
}

The rest of this paper is structured as follows. Section~\ref{sec:Stipula} presents the calculus {\miniStipula}
with examples and the semantics. Section~\ref{sec:undecidability} contains the undecidability results for
{\miniStipulaTA}, {\miniStipulaI} and {\miniStipulaD}. Section~\ref{sec:det_instantaneous} contains
the decidability results
about {\miniStipulaDI}. Section~\ref{sec:related} reports 
and discusses related work and 
Section~\ref{sec:conclusions} presents general 
conclusions and future work. The proofs of our propositions, lemmas
and theorems are reported in Section~\ref{sec:proofs}.

\section{The calculus {\miniStipula}}
\label{sec:Stipula}

{\miniStipula} is a calculus of contracts. A contract is declared  by the term
\begin{center}
{\tt stipula C \{ 
        init $\Q$  \quad
        $F$ 
    \}
} 
\end{center}
where $\C$ is the name of the contract, $\Q$ is the \emph{initial state} and
a $F$ is a sequence of \emph{functions}. We use 
a set of \emph{states}, ranged over $\Q$, $\Q'$, $\cdots$; and a set of \emph{function
names} $\f$, $\g$, $\cdots$.
The above contract is defined by the keyword {\tt stipula} and is initially in the state specified by the 
{\tt init} keyword. 
The syntax of functions $F$, events $W$ and time expressions $\timexpr$ is 
the following:
\[
\begin{array}{l@{\quad}cc@{\quad}ll}
\mathit{Functions} & F & ::= & \zero  \quad | \quad   \Qwithat \;\f
\, \{
\,  
W\,\} \,\tostate\, \Qwithat'\; \; F 
\\
\mathit{Events} &  W & ::= & \zero  \quad | \quad \timexpr \,\event\, \Qwithat
\,\tostate\, \Qwithat'\; \;  W 
\\
\mathit{Time \; expressions} & \timexpr & ::= &  \now + {\tt k} & ({\tt k} \in {\sf Nat})
\end{array}
\]

Contracts transit from one state to another either 
by invoking a \emph{function} or by running an \emph{event}. 
Functions $\Qwithat \;\f \, \{\,  W\,\} \,\tostate\, \Qwithat'$ are invoked by the 
\emph{external environment}
and define the state $\Q$ when the invocation is admitted  and the state $\Q'$ when the 
execution of $\f$ terminates.

\emph{Events} $W$ are sequences of \emph{timed continuations} that are created by
functions and schedule a transition in future execution. More precisely, the term 
$\timexpr \event \Qwithat \tostate\,\Qwithat'$ schedules a transition from 
$\Q$ to  $\Q'$ at $\timexpr$ time slot ahead the current time if  the contract will be
in the state $\Q$. 
The time expressions 
are additions $\now + {\tt k}$, where 
{\tt k} is a constant (a natural number representing, for example,  \emph{minutes}); 
$\now$ is a place-holder
that  will be replaced by 0 during the execution, see rule
\rulename{Function} in Table~\ref{tab.StipulaSemantics}. We always shorten $\now + 
{\tt 0}$
into $\now$.

\paragraph{Restriction and notations.} 
We write $\Qwithat \;\f\, \{ \, W\,\} \,\tostate\, \Qwithat' \in \C$ when the function
$\Qwithat \;\f\, \{ \, W\,\} \,\tostate\, \Qwithat'$ is in the contract $\C$. Similarly for events.
We assume that \emph{a function is uniquely determined by the 
tuple}
$\clause{\Q}{\f}{\Q'}$, that is the initial and final states
and the function name. 
In the same way, an event is uniquely determined by the tuple $\clause{\Q}{\ev_\n}{\Q'}$,
where {\tt n} is the line-code of the event~\footnote{We assume the code of
{\miniStipula} contracts to be organized in lines
of code, and each line contains at most one event definition.}.
Functions and events are generically called \emph{clauses} and, since tuples 
$\clause{\Q}{\f}{\Q'}$ and $\clause{\Q}{\ev_\n}{\Q'}$ uniquely identify functions and events, we 
will also call them clauses and write $\clause{\Q}{\f}{\Q'} \in \C$ and $\clause{\Q}{\ev_\n}{\Q'} \in \C$.
%

\subsection{Examples}
As a first, simple example consider the {\tt PingPong} contract:
 
{\small
\begin{lstlisting}[xleftmargin=.1\textwidth,linewidth=.7\textwidth, numbers=left,mathescape,basicstyle=\ttfamily]
stipula PingPong {	
   init Q0
   @Q0 ping {
       now + 1 $\event$ @Q1 $\tostate$ @Q2
   } $\tostate$ @Q1
   @Q2 pong { 
       now + 2 $\event$ @Q3 $\tostate$ @Q0
   } $\tostate$ @Q3
}
\end{lstlisting}
}

\noindent
The contract contains two functions: {\tt ping} and {\tt pong}. In particular {\tt ping} 
is invoked if the contract is in the state {\tt Q0}, {\tt pong} when the
contract is in {\tt Q2}. Functions (\emph{i}) make the contract transit in the 
state specified
by the term ``$\tostate \Qwithat$'' (see lines {\tt 5} and {\tt 8}) and (\emph{ii})
make the events in their body to be scheduled. In particular, an event
{\tt now + k $\event$ @Q $\tostate$ @Q$'$}
(see lines {\tt 4} and {\tt 7}) is a timed 
continuation that can run when the time is {\tt k} \emph{time slots ahead 
to the clock value when the function is called} and the state
of the contract is {\tt Q}. The only effect of executing an event is the change of
the state. When no event can be executed in a state either \emph{the time progresses} 
(a tick occurs) 
or \emph{a function is invoked}. The progression of time does not modify a state.

In the {\tt PingPong} contract, the initial state is {\tt Q0} where only
{\tt ping} may be invoked; no event is present because they are created by executing 
functions. The invocation of {\tt ping} makes the contract transit 
to {\tt Q1} and creates the event at line {\tt 4}, noted ${\tt ev}_{\tt 4}$. 
In {\tt Q1} there is still a unique possibility: executing ${\tt ev}_{\tt 4}$.
However, to execute it, it is necessary to wait {\tt 1} minute (one clock tick must elapse)
-- the time expression {\tt now + 1}. Then the state becomes {\tt Q2}
indicating that {\tt pong} may be invoked, thus letting the contract transit
to {\tt Q3} where, after {\tt 2} minutes (the expression {\tt now + 2}),
the event at line {\tt 7} can be executed and the contract returns to {\tt Q0}.
In {\tt PingPong}, every clause is reachable.

The following {\tt Sample} contract has an event that is unreachable:

{\small
\begin{lstlisting}[xleftmargin=.1\textwidth,linewidth=.7\textwidth, numbers=left,mathescape,basicstyle=\ttfamily]
stipula Sample {	
   init Init
   @Init f {
       now + 0 $\event$ @Go $\tostate$ @End
   } $\tostate$ @Run
   @Init g { } $\tostate$ @Go
}
\end{lstlisting}
}

\noindent
Let us discuss the issue. {\tt Sample} has two functions at lines {\tt 3} and {\tt 6}, called $\f$ and $\g$, respectively.
The two functions may be invoked in {\tt Init}, however
the
invocation of one of them excludes the other because their final states are
not {\tt Init}. Therefore 
the event at line {\tt 4}, which is inside {\tt f}, is unreachable since it can run 
only if $\g$ is executed. 

\subsection{The operational semantics}
\label{sec:operationalsemantics}
The meaning of {\miniStipula} primitives is defined operationally by a transition 
relation between configurations. A configuration, ranged over by $\contract$, $\contract'$, $\cdots$, is a tuple
$\C(\Q\semi \Sigma \semi\Psi)$ where 
\begin{itemize}
\item
$\C$ is the contract name;
\item 
$\Q$ is the current state of the contract;
\item
$\Sigma$ is either 
$\zero$ or a term  $\Psi \tostate \Q$. 
$\Sigma$ represents either an empty body (hence, a
clause can be executed or the time may progress) or a continuation where
a set of events $\Psi$ must be evaluated;
\item
$\Psi$ is a (possibly empty) multiset of \emph{pending events} that have been already scheduled 
for future execution but not yet triggered. 
In particular,
$\Psi$ is either $\zero$, when 
there are no pending events, or it is $\time_1\! \event_{\!{\tt n}_1\;}\! \Q_1 \tostate \Q_1' ~|\! \cdots \!|~ 
\time_h \!\event_{\!{\tt n}_h\;}\! \Q_h \tostate \Q_h'$ where ``$|$'' is commutative and associative
 with identity $\zero$.  
In every term $\time_i \event_{\!{\tt n}_i\;} \Q_i \tostate \Q_i'$, the constant $\time_i$ 
is obtained from the
time expression $\timexpr_i$ of the corresponding event by dropping  $\now$. The index $\n_i$ is
the \emph{line-code} of the event.

The function that turns a \emph{sequence of events} $\now + {\tt k} \event \Qwithat \tostate \Qwithat'$ into
a \emph{multiset of terms} ${\tt k} \event_{\!\n}\; \Q \tostate \Q'$ is
$\linecode{\clause{\Q }{\f}{\Q'}}{W}$ (see rule~\rulename{Function}, 
the trivial definition of this function is omitted). 
This function also drops the ``{\tt @}'' from the states.
\end{itemize}       
%
The transition relation of {\miniStipula} is  
$\pairbis{\contract}{\time} \lred{\mu} \pairbis{\contract'}{\time'}$, 
where $\mu$ is either \emph{empty} $\_$ or $\texttt{f}$ or $\ev_\n$ 
(the label $\ev_\n$ indicates the event at line $\n$).
The formal definition of $\pairbis{\contract}{\time} \lred{\mu} \pairbis{\contract'}{\time'}$ 
is given in Table~\ref{tab.StipulaSemantics} using 
\begin{itemize}
\item
the predicate $\noredbis{\Psi, \Q}$, whose definition is
\[
\noredbis{\Psi, \Q} \; \eqdef \;
\left\{
	\begin{array}{l@{\qquad}l}
	{\it true} & \text{if } \; \Psi = \zero
	\\
	{\it false} & \text{if } \; \Psi = 0 \event_{\!{\tt n}\;} \Q \tostate \Q'~|~ \Psi'
	\\
	\noredbis{\Psi', \Q} & \text{if } \; \Psi = {\tt k} \event_{\!{\tt n}\;} \Q' \tostate \Q'' ~|~ \Psi'
	\; \text{ and } \; ({\tt k} \neq 0 \; \text{ or } \; \Q' \neq \Q)
	\end{array} \right.
\]
\item
the function $\Psi \downarrow$, whose definition is
\[
\begin{array}{rll}
(\Psi ~|~ \Psi') \downarrow \; = & \Psi\downarrow ~|~ \; \Psi'\!\downarrow
\\
(\time+1 \event_{\!{\tt n}\;} \Q' \tostate \Q'') \downarrow \; = &  
\time \event_{\!{\tt n}\;} \Q' \tostate \Q''
\\
(0 \event_{\!{\tt n}\;} \Q' \tostate \Q'') \downarrow \; = &  \zero 
\end{array}
\] 
\end{itemize}

A discussion about the four rules follows. Rule \rulename{Function} defines invocations: 
the label specifies the function name {\tt f}. The 
transition may occur provided (\emph{i}) the contract is in the state \texttt{Q} that admits invocations of
{\tt f} and (\emph{ii}) no event can be triggered -- \emph{cf.}~the premise $\noredbis{\SemEvent, \Q}$ (event's execution preempts function invocation). 
Rule \rulename{State-Change} says that a contract changes state by adding
 the sequence of events 
$W$ to the multiset of pending events once {\now} has been dropped from time expressions.
Rule \rulename{Event-Match} specifies that an event handler may run provided $\Sigma$ is 
$\zero$, the time guard of the event has value $0$ and the initial state of the event is the same of 
the contract's state.
%
Rule \rulename{Tick} defines the progression of time. This happens when the contract 
has an empty $\Sigma$ 
and no event can be triggered. In this case, the events with time value 0 are garbage-collected and the 
the time values of the other events are decreased by one.
\begin{table}[t]
\[
\begin{array}{c@{\qquad}c}
\mathrule{Function}{
	\begin{array}{c}
	\Qwithat\,\f  
	\texttt{\{}\,W\, \texttt{\}} \tostate  \Qwithat' \in \C
	\quad
	\Psi' = \linecode{\clause{\Q}{\f}{\Q'}}{W}
	\\
	\noredbis{\SemEvent, \Q}
	\end{array}
	}{
	\pairbis{\C(\Q \semi  \zero \semi \SemEvent)}{\time} 
	    \lred{\f}
	\pairbis{\C(\Q \semi \Psi' \tostate  \Q' 
	\semi \SemEvent)}{\time}
	}
&
\mathrule{Event-Match}{
	\begin{array}{c}
	\SemEvent = {\tt 0} \event_{\tt \! n \;}  \Q \tostate  \Q' ~|~ \SemEvent'
	\end{array}
	}{
	\pairbis{\C(\Q \semi \zero \semi \SemEvent)}{\time} \lred{\ev_\n}
	\pairbis{\C(\Q \semi \zero \tostate  \Q' \semi \SemEvent')}{\time}
	} 
\\[.9cm]
\mathax{State-Change}{
	\pairbis{\C(\Q \semi \Psi' \tostate  \Q' \semi \SemEvent)}{\time} 
	\lred{}
	\pairbis{\C(\Q' \semi \zero \semi \SemEvent' ~|~\SemEvent)}{\time}
	}
&
\mathrule{Tick}{
	\noredbis{\SemEvent, \Q}
	}{
	\pairbis{\C(\Q \semi \zero \semi \SemEvent)}{\time} \lred{} 
	\pairbis{\C(\Q  \semi \zero \semi \SemEvent \downarrow)}{\time + 1}
	}
\\
\\
\end{array}
\]
\caption{\label{tab.StipulaSemantics} The operational semantics of {\miniStipula}}
\vspace{-.6cm}
\end{table}
The rules~\rulename{Function} and~\rulename{State-Change} might have been 
squeezed in one
rule only. We have preferred to keep them apart for compatibility with {\Stipula} (where 
functions' and events' bodies may also contain statements). 

The \emph{initial configuration} of a {\miniStipula} contract 
\begin{center}
{\tt stipula C \{ 
        init $\Q$  \quad
        $F$ 
    \}
} 
\end{center}
is $\contractin =  \pairbis{\C(\Q \semi \zero \semi \zero)}{\time}$; 
the \emph{set of configurations} of $\C$ are denoted by $\mathcal{C}_\C$. 

We write $\pairbis{\contract}{\time} \lred{} \pairbis{\contract'}{\time'}$ if there is a $\mu$
(as said above, $\mu$  may be either $\zero$ or a function 
name  or an event) such that $\pairbis{\contract}{\time} \lred{\mu} \pairbis{\contract'}{\time'}$.
We also write
$ \pairbis{\contract}{\time}\lred{}^* \pairbis{\contract'}{\time'}$, called \emph{computation},
if there are $\mu_1, \cdots, \mu_h$ such that $\pairbis{\contract}{\time} \lred{\mu_1} \cdots \lred{\mu_h}
\pairbis{\contract'}{\time'}$. 
\Reviewer{1}{Is there a reason to mix labelled and unlabelled transitions?}
\Answer{done}
Labels are useful in the examples (and proofs) to highlight the 
clauses that are executed. However, while in~\cite{Laneve04},  they were introduced to 
ease the formal reasonings, labels be overlooked in this work.
Let $\TS(\C)=(\mathcal{C}_\C,\lred{})$ be the transition system associated to $\C$.

\begin{remark}
Few issues about the semantics in Table~\ref{tab.StipulaSemantics} are worth to
remarked.
\begin{itemize}
\item[--]
{\miniStipula} has three \emph{causes for nondeterminism}: 
(\emph{i})
two functions can be invoked in a state, (\emph{ii}) either a function is invoked or
the time progresses (in this case the function may be invoked at a later time), and
(\emph{iii}) if two events may be executed at the same time, then one is chosen and
executed.

\item[--]
Rule~\rulename{Tick} defines the progression of time. This happens when the contract has
no event to trigger. 
Henceforth, the complete execution of a function or of an event cannot last more than a single time unit. It is worth to notice that this semantics admits the paradoxical phenomenon 
that an endless sequence of function invocations does not make time progress 
(\emph{cf.}~the encoding of the 
Minsky machines in {\miniStipulaI}).
\Reviewer{2}{"The phenomenon... nondeterministic." is not
   clear. What do you mean with in "follows by rules that are more
   nondeterministic"? 
}
\Answer{The sentence has been replaced.}
This paradoxical behaviour, which is also present in process calculi with 
time~\cite{HanssonJ90,MollerT90}, might be removed by adjusting
the semantics so to progress time when a maximal number of functions has been
invoked. To ease the formal arguments we have preferred to stick to the simpler semantics.
\item[--]
The semantics of {\miniStipula} in this paper is different from, yet equivalent to,~\cite{Laneve04,Stipula}.
In the literature, configurations have clock values that are incremented by the tick-rule. Then,
in the \rulename{Function} rule, the variable $\now$ is replaced by the current clock value 
(and not dropped, as in our rule). We have chosen the current presentation because it eases the reasoning
about expressivity.  
\end{itemize}
\end{remark}
%

To illustrate {\miniStipula} semantics, we discuss the computations of the 
{\tt PingPong} contract. Let 
${\tt Ev}_4 = 1 \event_{\!4\;} \Q{\tt 1} \tostate \Q{\tt 2}$ and
${\tt Ev}_7 = 2 \event_{\!7\;} \Q{\tt 3} \tostate \Q{\tt 0}$. We  write ${{\tt Ev}_i}^{(-j)}$ to indicate the 
${\tt Ev}_i$ where the time guard has been decreased by $j$ (time) units.

The contract 
may initially perform a number of \rulename{Tick} transitions, say $k$, 
and then a \rulename{Function} one. Therefore we have
(on the right we write the rule that is used):
\[
\begin{array}{r@{\;\;}l@{\qquad\qquad}l}
\pairbis{{\tt PingPong}({\tt Q0} \semi \zero \semi \zero)}{\time} \; \lred{}^k & \pairbis{{\tt PingPong}({\tt Q0} \semi \zero \semi \zero)}{\time'}
& \mbox{\rulename{Tick}}
\\
\lred{{\tt ping}} & \pairbis{{\tt PingPong}({\tt Q0} \semi {\tt Ev}_4 \tostate {\tt Q1} \semi \zero)}{\time'}
& \mbox{\rulename{Function}}
\\
\lred{} & \pairbis{{\tt PingPong}({\tt Q1} \semi \zero \semi {\tt Ev}_4)}{\time'}
& \mbox{\rulename{State-Change}}
\\
\lred{} & \pairbis{{\tt PingPong}({\tt Q1} \semi \zero \semi {{\tt Ev}_4}^{(-1)} )}{\time'+1}
& \mbox{\rulename{Tick}}
\\
\lred{\ev_4} & \pairbis{{\tt PingPong}({\tt Q1} \semi \zero \tostate {\tt Q2}\semi \zero )}{\time'+1}
& \mbox{\rulename{Event-Match}}
\\
\lred{} & \pairbis{{\tt PingPong}({\tt Q2} \semi \zero \semi \zero)}{\time'+1}
&  \mbox{\rulename{State-Change}}
\end{array}
\]
In ${\tt PingPong}({\tt Q2} \semi \zero \semi \zero)$, the 
contract 
may perform a \rulename{Function} executing {\tt pong}. Then
the computation continues as follows:
\[
\begin{array}{@{\qquad \qquad \qquad \qquad}c@{\; \;}l@{\qquad \qquad}l}
\lred{{\tt pong}} & \pairbis{{\tt PingPong}({\tt Q2} \semi {\tt Ev}_7 \tostate {\tt Q3} \semi \zero)}{\time''}
&  \mbox{\rulename{Function}}
\\
\lred{} & \pairbis{{\tt PingPong}({\tt Q3} \semi \zero \semi {\tt Ev}_7 )}{\time''}
& \mbox{\rulename{State-Change}}
\\
\lred{}^2 & \pairbis{{\tt PingPong}({\tt Q3} \semi \zero \semi {{\tt Ev}_7 }^{(-2)} )}{\time''+2}
& \mbox{\rulename{Tick}}
\\
\lred{\ev_7} & \pairbis{{\tt PingPong}({\tt Q3} \semi \zero \tostate {\tt Q0}\semi \zero)}{\time''+2}
& \mbox{\rulename{Event-Match}}
\\
\lred{} & \pairbis{{\tt PingPong}({\tt Q0} \semi \zero \semi \zero )}{\time''+2}
& \mbox{\rulename{State-Change}}
\end{array}
\]

\begin{definition}[State reachability]
\label{def:reachability}
Let $\C$ be a {\miniStipula} contract with initial configuration $\contractin$.
A state $\Q$ is reachable in $\C$ if and only if there exists a configuration 
$\pairbis{\C(\Q \semi \zero \semi \Psi)}{\time'}$ such that $\contractin \lred{}^* \pairbis{\C(\Q \semi \zero \semi \Psi)}{\time'}$.
\end{definition}


It is worth noting that our notion of state reachability is similar to 
the notion of control state reachability 
introduced by Alur and Dill 
in the context of timed automata in \cite{AD94}.
Control state reachability is defined as the problem
of checking, given an automaton
$A$ and a control state $q$, if there 
exists a run of $A$ that visits $q$.
Control state reachability was studied also 
for lossy Minsky Machines in \cite{Mayr03}
and lossy FIFO channel systems in \cite{PJ93,CFI96}.
In the context of Petri Nets it can be reformulated in terms of 
coverability of a given target marking \cite{KM69}.

%

\subsection{Relevant sublanguages}
We will consider the following fragments of {\miniStipula} whose relevance has been already discussed in 
the Introduction:

\Reviewer{1}{this should be more formal.}\Answer{Done}

\begin{description}
\item[{\miniStipulaI}{\rm ,}] 
called instantaneous {\miniStipula}, is the fragment where every time expression 
of the events is {\now} {\tt + 0}; 

\item[{\miniStipulaTA}{\rm ,}] 
called time-ahead {\miniStipula}, is the fragment where every time expression of the events 
is {\tt now + k}, with ${\tt k} > {\tt 0}$;

\item[{\miniStipulaD}{\rm ,}]
called determinate {\miniStipula}, is the fragment where 
the sets of initial states of functions and of events have empty intersection;
\emph{i.e.}, for each function
$\Qwithat \;\f\, \{\,  W\,\} \,\tostate\, \Qwithat'$
and event
$\timexpr \,\event\, \Qwithat''\,\tostate\, \Qwithat'''$
in a contract, we impose $\Q \neq \Q''$;

\item[{\miniStipulaDI}{\rm ,}]
called determinate-instantaneous {\miniStipula}, is the 
intersection between {\miniStipulaD} and {\miniStipulaI};
\emph{i.e.}, for each function
$\Qwithat \;\f\, \{\,  W\,\} \,\tostate\, \Qwithat'$
and event
$\timexpr \,\event\, \Qwithat''\,\tostate\, \Qwithat'''$
in a contract, we impose $\Q \neq \Q''$
and $\timexpr = {\tt now}+{\tt 0}$.
\end{description}

\section{Undecidability results}
\label{sec:undecidability}

\Reviewer{1}{
The authors present three reduction proofs from Minksy Machines. I agree with them that starting with the simplest reduction is reasonable. I must admit though that I only understood the idea of the proof once I studied the proof in the appendix. What I do not agree with is their treatment of the proofs in the appendix. While the first (simplest) one is quite fleshed out, the others, more complex ones, are fairly short and impossible to check properly. Basically, in the appendix the reverse reasoning could be applied: start with the most difficult proof and omit details of simpler proofs. Overall, the main text explanations are quite hard to follow and require significant mental work by the reader. The reductions all make use of contracts getting stuck in some configurations. Given that the problem to study is clause reachability, with the motivation that these are not good in contracts, I wonder how getting stuck translates to the legal setting and how reasonable it is. Last
ly, given the theorem statements but also proofs, the statement that Turing completeness was shown seems too strong. 
}
\Answer{
Technically speaking the proofs of the correctness of our 
encoding of Minsky machines are not complex. They are tedious
because they require to check many possible behaviours of the 
contract that could deviate from the behaviour of
the Minsky machine.
In particular, there are "wrong" transitions after which
the simulation get stuck because it will be no longer possible to reach a contract 
configuration corresponding to a new Minsky machine state. But, also in these
cases, the contract has outgoing transition, at least tick transitions 
that could be always executed. 
This is not an undesired behaviour for contracts, e.g., a legal contract could
become obsolete.}

\Reviewer{3}{since the encodings are increasingly complex, informal proofs like the ones included in the paper leave some doubts about the correctness of the results }
\Answer{Same comment as before}

To show the undecidability of state reachability, we rely on  a reduction technique from a Turing-complete model to {\miniStipulaI},
{\miniStipulaTA},  
and  {\miniStipulaD}. 
The Turing-complete models we consider are the Minsky machines~\cite{Minsky}. 
A Minsky machine is an automaton with two registers $\R_1$ and $\R_2$ holding arbitrary large 
natural numbers, a finite set of states $\Q, \Q', \cdots$, and a program {\tt P} consisting of a 
finite sequence of numbered instructions 
of the following type:
\begin{itemize}
\item
$\Q : \Inc(\R_i, \Q')$: in the state $\Q$, increment $\R_i$ 
and go to the state $\Q'$;
\item
$\Q : \DecJump(\R_i, \Q', \Q'')$: in the state $\Q$, if the content of $\R_i$ is zero then
go to the state $\Q'$, else decrease $\R_i$ by 1 and go to the
state $\Q''$.
\end{itemize}
A configuration of a Minsky machine is given by a tuple $(\Q,v_1,v_2)$ where $\Q$ indicates the 
state of the machine and $v_1$ and $v_2$ are the contents of the two registers. A transition of a
Minsky machine is denoted by $\lred{}_{\tt M}$.
We assume that the 
machine has an \emph{initial state} $\Q_0$ and a \emph{final state} $\Q_F$ that has no instruction
starting at it.  The \emph{halting problem} of a Minsky machine is assessing whether there exist $v_1, v_2$ 
such that $(\Q_F, v_1, v_2)$ is reachable 
starting from $(\Q_0, 0, 0)$. This problem is undecidable~\cite{Minsky}. In the rest of the section, we will demonstrate the undecidability of state reachability for 
{\miniStipulaI}, {\miniStipulaTA},  
and  {\miniStipulaD} by providing encodings of Minsky machines. The encodings we use are increasingly complex. Therefore, we will present the undecidability results starting from the simplest one.

\subsection{Undecidability results for {\miniStipulaI}}
\label{sec:instantaneous}

\begin{table}[t]
\hrulefill

Let $M$ be a Minsky machine with initial state $\Q_0$. Let {\tt I}$_M$ be the
{\miniStipulaI} contract
\begin{center}
{\tt stipula I$_M$ \{ 
        init Start  \quad
        $F_M$ 
    \}
} 
\end{center}
where $F_M$ contains the functions 
\begin{itemize}
\item
{\tt @Start fstart $\{$ \now $\event$ @aQ$_0$ $\tostate$ @bQ$_0$ $\}$ $\tostate \; \Qwithat_0$  } 

\item
for every instruction $\Q : \Inc(\R_i, \Q')$, with $i \in \{1, 2 \}$: 
\[
\begin{array}{l@{\;}l}
\Qwithat \; \; {\tt fincQ} \; \{ & \now \, \event \, \withat{\tt dec}_i \tostate \withat{\tt ackdec}_i
\\
& \now \, \event \, \withat{\tt bQ} \tostate \Qwithat'
\\
& \now \, \event \, \withat{\tt aQ}' \tostate \withat{\tt bQ}'
\\
 \} \tostate \withat{\tt aQ} 
\end{array}
\]

\medskip

\item
for every instruction $\Q : \DecJump(\R_i, \Q', \Q'')$,  with $i \in \{1, 2 \}$:
\[
\begin{array}{l@{\qquad\qquad}l}
\begin{array}{l@{\;}l}
\Qwithat \; \; {\tt fdecQ} \;  \{ & \now \, \event \, \withat{\tt ackdec}_i \tostate \withat{\tt aQ}
\\
& \now \, \event \, \withat{\tt bQ} \tostate \Qwithat''
\\
& \now \, \event \, \withat{\tt aQ}'' \tostate \withat{\tt bQ}''
\\
 \} \tostate \withat{\tt dec}_i
\end{array}
&
\begin{array}{l@{\;}l}
\Qwithat \; \; {\tt fzeroQ} \, \{ & \now \, \event \,  \withat{\tt zero}_i \tostate \withat{\tt aQ} 
\\
&\now \, \event \, \withat{\tt bQ} \tostate \Qwithat'
\\
& \now  \, \event \, \withat{\tt aQ}' \tostate \withat{\tt bQ}'
\\
 \}
\tostate \withat{\tt dec}_i
\end{array}
\end{array}
\]
\item
$\withat{\tt dec}_1 \; \;  {\tt fdec1} \; \{ \; \} \; \tostate \; \withat{\tt zero}_1$ \quad and \quad
$\withat{\tt dec}_2 \; \;  {\tt fdec2} \; \{ \; \} \; \tostate \; \withat{\tt zero}_2$
\end{itemize}
\hrulefill

\caption{\label{tab.encodingI} The {\miniStipulaI} contract modelling a Minsky machine}
\end{table}
Table~\ref{tab.encodingI} defines the encoding of a Minsky machine $M$ into a 
{\miniStipulaI} contract ${\tt I}_M$.
The relevant invariant of the encoding is that, every time $M$ transits to
$(\Q, v_1, v_2)$ then {\tt I}$_M$ may transit to
${\tt I}_M(\Q, \zero, \Psi)$, where the number of events $0 \, \event \, \withat\mathtt{dec}_1 \tostate \withat{\tt ackdec}_1$ and $0 \, \event \, \withat\mathtt{dec}_2 \tostate \withat{\tt ackdec}_2$ in $\Psi$
are $v_1$ and $v_2$, respectively. Additionally, a transition 
$(\Q, v_1, v_2) \lred{}_{\tt M} (\Q',v_1', v_2')$ corresponds to a sequence of transitions 
${\tt I}_M(\Q, \zero, \Psi) \lred{}^* {\tt I}_M(\Q', \zero, \Psi')$
with either (\emph{i}) a {\tt fincQ} function, if the Minsky machine performs an ${\it Inc}$ instruction, 
or (\emph{ii}) either a {\tt fdecQ} or a {\tt fzeroQ} function, if the Minsky machine performs a
 ${\it DecJump}$ instruction. 
In particular, the function {\tt fincQ} has the ability to add one instance
of the event $0 \, \event \, \withat\mathtt{dec}_1 \tostate \withat{\tt ackdec}_1$ (or $0 \, \event \, \withat\mathtt{dec}_2 \tostate \withat{\tt ackdec}_2$). The function {\tt fdecQ} has the effect of consuming one
instance of the event $0 \, \event \, \withat\mathtt{dec}_1 \tostate \withat{\tt ackdec}_1$ (resp. $0 \, \event \, \withat\mathtt{dec}_2 \tostate \withat{\tt ackdec}_2$), by entering the state $\withat\mathtt{dec}_1$
(resp. $\withat\mathtt{dec}_2$) which triggers such event. Also the function 
{\tt zeroQ} enters in one of the states $\withat\mathtt{dec}_i$, 
but in this case the computation will have the ability to continue
only if the state $\withat\mathtt{zero}_i$ will be reached
(this because the produced 
event $0 \, \event \, \withat\mathtt{zero}_i \tostate \withat{\tt aQ}$
must be triggered to allow the computation to continue). 
But $\withat\mathtt{zero}_i$ can be reached only
if no event $0 \, \event \, \withat\mathtt{dec}_i$ 
is present, because only in this case the function {\tt fdec$_i$}
can be invoked (remember that events have priority w.r.t. 
function invocations).

We finally observe that the transitions 
${\tt I}_M(\Q, \zero, \Psi) \lred{}^* {\tt I}_M(\Q', \zero, \Psi')$
which mimick the Minsky machine step $(\Q, v_1, v_2) \lred{}_{\tt M} (\Q',v_1', v_2')$ are not the unique possibile transitions.
Nevertheless, in case different alternative transitions
are executed, the contract {\tt I}$_M$ will have no longer
the possibility to reach a state corresponding to a Minsky 
machine state, thus the simulation of the machine gets stuck. 
One of the alternative transitions is the invocation of the
function {\tt fdecQ} when the corresponding register is empty.
In this case, the simulation gets stuck because the state
$\withat{\tt ackdec}_i$, necessary to trigger the event  
$0 \, \event \, \withat{\tt ackdec}_i \tostate \withat{\tt aQ}$,
cannot be reached.
Similarly, 
if {\tt fzeroQ} is invoked when the corresponding register is nonempty
the state $\withat{\tt zero}_i$, necessary to trigger the event  
$0 \, \event \, \withat{\tt zero}_i \tostate \withat{\tt aQ}$,
cannot be reached.
Also the elapsing of time
is problematic because the events 
$0 \, \event \, \withat\mathtt{dec}_1 \tostate \withat{\tt ackdec}_1$ 
and $0 \, \event \, \withat\mathtt{dec}_2 \tostate \withat{\tt ackdec}_2$, which model the 
content of the registers, are erased by a \rulename{Tick} transition.
This could corrupt the modeling of the registers.
In this case the simulation gets stuck because 
also the ``management event'' $0 \, \event \, {\tt aQ} \tostate {\tt bQ}$ 
is erased, which is necessary to model the transition from a state {\tt Q}
of the Minsky machine to the next one.

\begin{restatable}{theorem}{thmintantaneous}
\label{thm.instantaneous}
State reachability is undecidable in {\miniStipulaI}.
\end{restatable}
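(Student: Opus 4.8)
The plan is to reduce the (undecidable) halting problem for Minsky machines to state reachability in \miniStipulaI, using exactly the encoding ${\tt I}_M$ of Table~\ref{tab.encodingI}. I would show that $M$ reaches $(\Q_F, v_1, v_2)$ for some $v_1, v_2$ starting from $(\Q_0, 0, 0)$ if and only if the state $\Q_F$ is reachable in ${\tt I}_M$ from $\contractin = {\tt I}_M({\tt Start} \semi \zero \semi \zero)$. The key device is an \emph{encoding relation} $\approx$ linking a Minsky configuration $(\Q, v_1, v_2)$ to every contract configuration ${\tt I}_M(\Q \semi \zero \semi \Psi)$ in which $\Psi$ holds exactly $v_i$ copies of the register event $0 \event \withat{\tt dec}_i \tostate \withat{\tt ackdec}_i$ for $i \in \{1,2\}$, together with the single management event $0 \event \withat{\tt aQ} \tostate \withat{\tt bQ}$ and nothing else. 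Initialisation is treated separately: from $\contractin$ the only productive move is {\tt fstart}, which installs the management event for $\Q_0$ and yields a configuration with $(\Q_0, 0, 0) \approx {\tt I}_M(\Q_0 \semi \zero \semi \Psi)$.

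For the \textbf{soundness} direction I would prove that each Minsky step is matched by a forced run of the contract, i.e.\ $(\Q, v_1, v_2) \lred{}_{\tt M} (\Q', v_1', v_2')$ and $(\Q, v_1, v_2) \approx \conf$ imply $\conf \lred{}^* \conf'$ with $(\Q', v_1', v_2') \approx \conf'$. For an $\Inc$, invoking {\tt fincQ} adds a fresh register event and the management event for $\Q'$, after which the chain $\withat{\tt aQ} \to \withat{\tt bQ} \to \Q'$ is driven deterministically by \rulename{Event-Match} and \rulename{State-Change}. For a $\DecJump$ the correct branch is selected by the matching function: {\tt fdecQ} when $\R_i$ is non-empty, so that (by the priority of events over invocations, the premise $\noredbis{\Psi, \Q}$ of \rulename{Function}) a register event fires at $\withat{\tt dec}_i$ and the run proceeds through $\withat{\tt ackdec}_i$ to $\Q''$ with $\R_i$ decremented; and {\tt fzeroQ} when $\R_i$ is empty, so that no register event is available at $\withat{\tt dec}_i$, the function {\tt fdec}$i$ may fire, and the run proceeds through $\withat{\tt zero}_i$ to $\Q'$ with $\R_i$ unchanged.

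The \textbf{faithfulness} direction is the crux and the main obstacle, because \miniStipulaI\ allows genuine deviations: at a Minsky state one may invoke the wrong function, and at any aligned configuration one may instead apply \rulename{Tick}. I would show that every deviation drives the contract into an \emph{absorbing dead end} --- a state among $\withat{\tt zero}_i$, $\withat{\tt ackdec}_i$, $\withat{\tt aQ}$, $\withat{\tt bQ}$ in which no event is triggerable and no function is defined, so that only \rulename{Tick} remains and no Minsky state (in particular $\Q_F$) is ever reached again. The three cases are: invoking {\tt fdecQ} on an empty register, where the run reaches $\withat{\tt zero}_i$ but lacks the event $0 \event \withat{\tt zero}_i \tostate \withat{\tt aQ}$; invoking {\tt fzeroQ} on a non-empty register, where a register event fires and the run reaches $\withat{\tt ackdec}_i$ but lacks the event $0 \event \withat{\tt ackdec}_i \tostate \withat{\tt aQ}$; and applying \rulename{Tick} at an aligned configuration, which garbage-collects all time-$0$ events and in particular destroys the management event, so that the ensuing chain stalls at $\withat{\tt aQ}$. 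The delicate point is to verify that these dead ends are genuinely inescapable --- only \rulename{Tick} applies and it preserves deadness --- which is the tedious bookkeeping over all interleavings the authors allude to. Granting it, the only computations that can reach $\Q_F$ are the faithful simulations, whence $\Q_F$ is reachable in ${\tt I}_M$ iff $M$ halts, and undecidability of state reachability in \miniStipulaI\ follows.
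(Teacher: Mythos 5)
Your proposal is correct and takes essentially the same approach as the paper: the same Table~\ref{tab.encodingI} encoding, the same invariant (your relation $\approx$ is exactly the paper's multisets $\sem{v_1,v_2}_{\Q}^{\tt I}$), and the same two-part argument of forward simulation plus a dead-end analysis of the three deviations ({\tt fdecQ} on an empty register, {\tt fzeroQ} on a non-empty one, and a premature \rulename{Tick} erasing the management event $0 \event {\tt aQ} \tostate {\tt bQ}$). The only negligible difference is at the very end: the paper additionally extends ${\tt I}_M$ with a fresh clause $\clause{\Q_F}{{\tt f}}{\Q_F'}$ so as to phrase undecidability also in terms of clause reachability, whereas you conclude directly with reachability of the state $\Q_F$, which suffices for the theorem as stated.
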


\subsection{Undecidability results for {\miniStipulaTA}}
\label{sec:timeahead}
\newcommand{\control}{management}

%
Also in this case we reduce 
from the halting problem of Minsky machines to state
reachability in {\miniStipulaTA}. 
The encoding of a machine $M$ is defined in Table~\ref{tab.encodingTA}.
\begin{table}[t]

\hrulefill

Let $M$ be a Minsky machine with initial state $\Q_0$. Let {\tt TA}$_M$ be the
{\miniStipulaTA} contract
\begin{center}
{\tt stipula TA$_M$ \{ 
        init Q$_0$  \quad
        $F_M$ 
    \}
} 
\end{center}
with $F_M$ containing the functions 
\begin{itemize}
%
\item
for every instruction $\Q : \Inc(\R_i, \Q')$, with $i \in \{1, 2 \}$: 
\[
\begin{array}{l@{\;}l}
\Qwithat \; \; {\tt fincQ} \; \{ & \now+1 \, \event \, \withat{\tt dec}_i \tostate \withat{\tt ackdec}_i
\\
& \now+1 \, \event \, \Qwithat' \tostate \withat{\tt end}
\\
\} \tostate \Qwithat' 
\end{array}
\]

\medskip
\item
for every instruction $\Q : \DecJump(\R_i, \Q', \Q'')$,  with $i \in \{1, 2 \}$:
\[
\begin{array}{l@{\quad}l}
\begin{array}{l@{\;}l}
\Qwithat \; \; {\tt fdecQ} \;  \{ & \now +1\, \event \, \withat{\tt ackdec}_i \tostate {\withat{\tt nextQ''}}_i
\\
& \now +1 \, \event \, \withat{\tt wait} \tostate \withat{\tt dec}_1
\\
& \now+2 \, \event \, \withat{\tt dec}_1 \tostate \withat{\tt end}
\\
& \now+2 \, \event \, \withat{\tt dec}_2 \tostate \withat{\tt end}
\\
& \now+2 \, \event \, {\withat{\tt nextQ''}}_i \tostate \withat{\tt end}
\\
& \now+2 \, \event \, \withat{\tt ackdec}_1 \tostate \withat{\tt end}
\\
& \now+2 \, \event \, \withat{\tt ackdec}_2 \tostate \withat{\tt end}
\\
& \now+3 \, \event \, {\withat{\tt Q''}} \tostate \withat{\tt end}
\\
 \} \tostate \withat{\tt wait}
\end{array}
&
\begin{array}{l@{\;}l}
\Qwithat \; \; {\tt fzeroQ} \, \{ & \now+1 \, \event \,  \withat{\tt ackdec}_i \tostate \withat{\tt end} 
\\
&\now+2 \, \event \, \withat{\tt next} \tostate {\withat{\tt Q'}}
\\
& \now +1 \, \event \, \withat{\tt wait} \tostate \withat{\tt dec}_1
\\
& \now +3 \, \event \, {\withat{\tt Q'}} \tostate \withat{\tt end}
\\
 \}
\tostate \withat{\tt wait}
\\
\\
\\
\\
\\
\end{array}
\end{array}
\]

\item
for $i \in \{1,2\}$ and for every state $\Q$ of $M$, we have the following functions:
\[
\begin{array}{l}
\withat{\tt wait} \; \;  {\tt fwait} \; \{ \; \} \; \tostate \; \withat{\tt end}
\\
\withat{\tt dec}_1 \; \;  {\tt fdec1} \; \{ \; \} \; \tostate \; \withat{\tt dec}_2
 \quad \text{ and } \quad
\withat{\tt dec}_2 \; \;  {\tt fdec2} \; \{ \; \} \; \tostate \; \withat{\tt next}
\\
\withat{\tt ackdec}_i \; \;  {\tt fackdec\_i} \; \{ \; \now + 2 \event \withat{\tt dec}_i \tostate
\withat{\tt ackdec}_i \} \; \tostate \; \withat{\tt dec}_i
\\
{\withat{\tt nextQ}}_i \; \;  {\tt fnextQ\_i} \; \{ \; \now + 1 \event \withat{\tt next} \tostate
\Qwithat \} \; \tostate \; \withat{\tt dec}_i
\end{array}
\]
\end{itemize}
\hrulefill
\caption{\label{tab.encodingTA} The {\miniStipulaTA} contract modelling a Minsky machine}
\end{table}
\Reviewer{1}{the distinction between machine and control state is unclear.}
\Answer{Thanks for the comment; the terminology ``control state'' is not
fortunate due to the overloading w.r.t. ``control state reachability''
mentioned in other parts of the paper. We have modified ``control state''
in ``{\control} state''}
In this case,
states of the {\miniStipulaTA} contract alternates between ``\emph{machine states}'' occurring, say, at even 
time clocks, and ``\emph{{\control} states}'' occurring at odd time clocks. For this reason we add
erroneous transitions at even time clocks from {\control} states to {\tt end} (a state without 
outgoing transitions) and at odd time clocks from machine states to {\tt end}.
Similarly to Table~\ref{tab.encodingI}, a unit in the register $i$ is encoded by an event
$\now + 1 \event \withat{\tt dec}_i \tostate \withat{\tt ackdec}_i$ (assuming to be in a machine state).
Therefore the instruction $\Q : \Inc(\R_i, \Q')$, which occurs in a machine state $\Q$, amounts to adding to the next time-clock
such event and 
the erroneous event $\now+1 \, \event \, \Qwithat' \tostate \withat{\tt end}$ 
that makes the contract transit to {\tt end} if it is still in $\Q'$ at the
beginning of the
next time-clock.

The encoding of $\Q : \DecJump(\R_i, \Q', \Q'')$ is more convoluted because we have to move all the
events $\now + 1 \event \withat{\tt dec}_i \tostate \withat{\tt ackdec}_i$ ahead two clock units
(except one, if the corresponding register value is positive). Assume an invocation of {\tt fdecQ} occurs
and $i = 1$. Then a bunch of events are created (see the body of {\tt fdecQ} in Table~\ref{tab.encodingTA})
and the contract transits into {\tt wait}. In this state, 
a \rulename{Tick} transition can occur; hence the time values of the events are decreased by one. Then $0 \event {\tt wait} 
\tostate {\tt dec}_1$ is enabled and the protocol moving ahead all the events 
$0 \event {\tt dec}_1 \tostate {\tt ackdec}_1$ (except one)
and
$0 \event {\tt dec}_2 \tostate {\tt ackdec}_2$ starts. The protocol works as follows:
\begin{enumerate}
\item
since the state is ${\tt dec}_1$, $0 \event {\tt dec}_1 \tostate {\tt ackdec}_1$ is fired
(assume the value of $R_1$ is positive) and the contract transits to  ${\tt ackdec}_1$;
\item
in ${\tt ackdec}_1$, $0 \event {\tt ackdec}_1 \tostate {\tt nextQ''}_1$ is fired
and the contract transits to state ${\tt nextQ''}_1$ (one event $0 \event {\tt dec}_1 \tostate {\tt ackdec}_1$
has been erased without being moved ahead);
\item
in ${\tt nextQ''}_1$, the function 
\[
{\withat{\tt nextQ''}}_1 \; \;  {\tt fnextQ''\_1} \; \{ \; \now + 1 \event \withat{\tt next} \tostate
\withat{\tt Q''} \} \; \tostate \; \withat{\tt dec}_1
\]
can be invoked. 
A transition to ${\tt dec}_1$ happens and the event 
$1 \event {\tt next} \tostate \Q''$ is created. When the transfer protocol terminates, this 
event will make the contract transit to $\Q''$;
\item
at this stage the transfer of events $0 \event {\tt dec}_i \tostate {\tt ackdec}_i$
occurs. At first
the protocol moves the events $0 \event {\tt dec}_1 \tostate {\tt ackdec}_1$ (every
such event is fired and then the function ${\tt fackdec}_1$ that recreates the same event at $\now + 2$ is executed)
then the function ${\tt fdec}_1$ is invoked and the same protocol is applied to the events 
$0 \event {\tt dec}_2 \tostate {\tt ackdec}_2$;
\item
at the end of the transfers, the function ${\tt fdec}_2$ is invoked and the contract transits to 
{\tt next};
\item
in {\tt next}, no event nor function can be executed, therefore a \rulename{Tick} occurs and
the time values of the events are decreased by one (in particular those
$2 \event {\tt dec}_i \tostate {\tt ackdec}_i$ that were transferred at step 4). 
Then $0 \event {\tt next} \tostate
\Q''$ that was created at step 3 is executed and the contract transits to $\Q''$.
\end{enumerate}
When the register $R_1$ is 0 and {\tt fdecQ} is invoked then the event at step 2 cannot be produced and 
the computation is fated to reach an {\tt end} state (by {\tt fdec1}, {\tt fdec2} or by \rulename{tick} and 
then performing $0 \event {\tt dec}_1 \tostate {\tt end}$). If, on the contrary, the invoked function is
{\tt fzeroQ}, the same protocol as above is used to transfer 
the events $0 \event {\tt dec}_i \tostate {\tt ackdec}_i$ (in this case with $i = 2$ only) and the
contract reaches the step 5 where, after a \rulename{Tick}, the event 
$\now+2 \, \event \, {\tt next} \tostate \Q'$ can be executed. The undecidability result for {\miniStipulaTA} follows.
\Reviewer{2}{Because the encoding of the Minsky machine works in a two phase scenario, it could be useful to have a result expressing the correspondence between transitions of the Minsky machine and transitions of its encoding. Such a result would likely express some kind of simulation property. Would this be possible, or are there technical aspects making this difficult?}
\Answer{This is exactly what we do in the proofs, which are longer and longer because the management protocols 
are increasingly complex and we need to analyze all the possible cases.}

\begin{restatable}{theorem}{thmtimeahead}
\label{thm.timeahead}
State reachability is undecidable in {\miniStipulaTA}.
\end{restatable}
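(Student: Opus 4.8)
The plan is to reduce the halting problem for Minsky machines to state reachability in $\miniStipulaTA$, using the encoding ${\tt TA}_M$ of Table~\ref{tab.encodingTA}. Given a machine $M$ with initial state $\Q_0$ and final state $\Q_F$, I would show that $\Q_F$ is reachable in the contract ${\tt TA}_M$ \emph{if and only if} $M$ halts, i.e.\ reaches $(\Q_F, v_1, v_2)$ for some $v_1, v_2$ starting from $(\Q_0, 0, 0)$. Since halting is undecidable~\cite{Minsky}, undecidability of state reachability in $\miniStipulaTA$ follows. Because $\Q_F$ carries no instruction of $M$, its appearance during the simulation is an unambiguous witness of halting.

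The backbone is an invariant tying the two models together: a Minsky configuration $(\Q, v_1, v_2)$ is represented by a contract configuration ${\tt TA}_M(\Q, \zero, \Psi)$ reached at an \emph{even} time clock, where $\Psi$ contains exactly $v_i$ pending ${\tt dec}_i$-events for each $i \in \{1,2\}$ and no other simulation-relevant events. The even/odd discipline is the essence of the time-ahead encoding: genuine machine states live at even clocks and the auxiliary management states at odd clocks, which is why every event is scheduled with a strictly positive guard $\now + 1$, $\now + 2$ or $\now + 3$. I would first establish \emph{soundness} (forward simulation): each step $(\Q, v_1, v_2) \lred{}_{\tt M} (\Q', v_1', v_2')$ is matched by a finite block ${\tt TA}_M(\Q, \zero, \Psi) \lred{}^* {\tt TA}_M(\Q', \zero, \Psi')$ that re-establishes the invariant. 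For an $\Inc$ this merely schedules one further ${\tt dec}_i$-event at the next clock; for a $\DecJump$ it runs the transfer protocol detailed after Table~\ref{tab.encodingTA}, advancing every register event by two clocks, consuming exactly one copy of the relevant ${\tt dec}_i$-event when the register is nonzero, and routing to $\Q'$ or $\Q''$ accordingly.

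The hard part, and the main obstacle, is \emph{completeness}: proving that $\Q_F$ cannot be reached by any computation that strays from the intended simulation. Since events preempt both function invocation and \rulename{Tick}, and since all guards are strictly positive, the contract can never fire register events for both registers in a single slot; this is exactly why the protocol decouples the two registers across distinct slots and recreates the moved events through the functions {\tt fackdec\_i} and {\tt fnextQ\_i}. I would classify every possible deviation and show it is trapped in the sink state ${\tt end}$, from which no machine state is reachable. A \rulename{Tick} taken at the wrong moment garbage-collects the register events \emph{together with} the management control events; the erroneous ${\tt end}$-events inserted at the critical offsets $+1$, $+2$, $+3$ in Table~\ref{tab.encodingTA} then force the contract into ${\tt end}$. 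Likewise, invoking {\tt fdecQ} on an empty register cannot reach ${\tt ackdec}_i$, and invoking {\tt fzeroQ} on a nonempty register cannot reach ${\tt next}$; in each case a residual ${\tt dec}_i$- or ${\tt ackdec}_i$-event paired with a matching ${\tt end}$-guard derails the run before a fresh even clock is reached. The delicate bookkeeping is to check, for every odd management clock, that exactly one intended firing is enabled and that every alternative interleaving of firings and ticks is caught by an ${\tt end}$-event at the right offset, so that no leftover or prematurely advanced register event can masquerade as a faithful encoding at the following even clock; this demands a careful case analysis over the protocol steps~1--6, but is conceptually routine once the offsets are fixed. Combining the two directions, $\Q_F$ is reachable in ${\tt TA}_M$ exactly when $M$ halts, and the theorem follows.
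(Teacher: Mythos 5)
Your proposal takes essentially the same route as the paper's own proof: the same encoding ${\tt TA}_M$ of Table~\ref{tab.encodingTA}, the same two-part argument (a forward simulation showing each Minsky step is matched by a block of contract transitions re-establishing an invariant that represents register values as multiplicities of ${\tt dec}_i$-events at the machine states, plus a completeness argument that every deviating run --- misplaced \rulename{Tick}, decrement of an empty register, zero-test of a nonempty one --- is trapped in ${\tt end}$ and can never again reach a machine state), concluded by the undecidability of the Minsky halting problem. One point of precision: the paper's invariant $\sem{v_1,v_2}_{\Q}^{\tt TA} \,|\, \Psi_{\tt TA}$ explicitly carries the pending error event $1 \event \Q \tostate {\tt end}$ for the current machine state together with leftover guard-zero ${\tt end}$-events, and your wording ``exactly $v_i$ pending ${\tt dec}_i$-events and no other simulation-relevant events'' elides these even though they are precisely what makes a wrong \rulename{Tick} fatal (and they do persist across simulation blocks), so the invariant as you state it would not literally be re-established and must be amended along the paper's lines when the case analysis is carried out.
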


\subsection{Undecidability results for {\miniStipulaD}}
\label{sec:determinate}

\Reviewer{2}{
Here are two remarks, possibly naive, about the definitions of the
subsets of microStipula. First, the SD fragment is quite poor. This is probably why the encoding of Minsky machines needs more infrastructure, and why defining an encoding is more difficult than in the other fragments. So all in all, does the SD fragment make sense from the point of view of "programming in Stipula", beyond the ability to prove undecidability of clause reachability?
}
\Answer{We have already answered in the Introduction to a similar remark. The contract {\tt alt\_BikeRental}
in the appendix conforms to the {\miniStipulaDI} constraints}

Also in this case we reduce 
from the halting problem of Minsky machines to 
state reachability in {\miniStipulaD}. 
%
In {\miniStipulaD}, functions and events start in different states. Therefore the encoding 
of Table~\ref{tab.encodingTA} is inadequate since we used the 
expedient that events preempt functions when enabled in the same state to make the contract transit 
to the {\tt end} state (which indicates an error).
For {\miniStipulaD} we need to refine the sequence \emph{machine-{\control} states} in order to 
have extra {\control} over erroneous operations (decrease of zero register or zero-test of a positive
register). The idea is to manage at different times the events ${\tt dec}_1 \tostate {\tt ackdec}_1$ and 
${\tt dec}_2 \tostate {\tt ackdec}_2$ that model registers' units. In particular, if the contract is in a 
(machine) state $\Q$  at time 0 then ${\tt dec}_1 \tostate {\tt ackdec}_1$ are at time 1 and 
${\tt dec}_2 \tostate {\tt ackdec}_2$ are at time 3. At times 2 and 4 {\control} states 
perform management
operations.
Therefore the sequence of states becomes\\
\indent\
$\mbox{\emph{machine-state} $\rightarrow$ \emph{transfer1-state} $\rightarrow$ \emph{{\control}1-state}
$\rightarrow$}$ \\
\indent
$\mbox{ \emph{transfer2-state} $\rightarrow$ \emph{{\control}2-state} 
}$

\noindent
where every state is one tick ahead the previous one. 
Therefore, \emph{transfer1-state} and \emph{transfer2-state} manage the transfer of 
${\tt dec}_1 \tostate {\tt ackdec}_1$ and 
${\tt dec}_2 \tostate {\tt ackdec}_2$ ahead five clock units.  
\begin{table}[p]
\hrulefill

{\small
Let $M$ be a Minsky machine with initial state $\Q_0$. Let {\tt D}$_M$ be the
{\miniStipulaD} contract
\begin{center}
{\tt stipula D$_M$ \{ 
        init Start  \quad
        $F_M$ 
    \}
} 
\end{center}
with $F_M$ contains the functions 
\begin{itemize}
\item
{\tt @Start fstart $\{$ \now $\event$ \withat{\tt notickA} $\tostate$ \withat{\tt cont} $\}$ $\tostate \; \Qwithat_0$  } 

\medskip

\item \ 
\vspace{-.6cm}
\[
\begin{array}{l@{\qquad\qquad}l}
\begin{array}{l}
\!\!\!\! \text{for every } \Q : \Inc(R_1, \Q'):
\\[.2cm]
\Qwithat: {\tt fAincQ} \; \{
\\
\qquad \now +1 \event \withat{\tt dec}_1 \tostate \withat{\tt ackdec}_1
\\
\qquad \now \event \withat{\tt cont} \tostate \Qwithat'
\\
\qquad \now \event \withat{\tt notickB}  \tostate \withat{\tt cont}
\\
\} \tostate \withat{\tt notickA}
\\[.2cm]
\Qwithat: {\tt fBincQ} \; \{
\\
\qquad \now +1 \event \withat{\tt dec}_1 \tostate \withat{\tt ackdec}_1
\\
\qquad \now \event \withat{\tt cont} \tostate \Qwithat'
\\
\qquad \now \event \withat{\tt notickA} \tostate \withat{\tt cont}
\\
\} \tostate \withat{\tt notickB}
\end{array}
&
\begin{array}{l}
\!\!\!\! \text{for every } \Q : \Inc(R_2, \Q'):
\\[.2cm]
\Qwithat: {\tt fAincQ} \; \{
\\
\qquad \now +3 \event \withat{\tt dec}_2 \tostate \withat{\tt ackdec}_2
\\
\qquad \now \event \withat{\tt cont} \tostate \Qwithat'
\\
\qquad \now \event \withat{\tt notickB} \tostate \withat{\tt cont}
\\
\} \tostate \withat{\tt notickA}
\\[.2cm]
\Qwithat: {\tt fBincQ} \; \{
\\
\qquad \now +3 \event \withat{\tt dec}_2 \tostate \withat{\tt ackdec}_2
\\
\qquad \now \event \withat{\tt cont} \tostate \Qwithat'
\\
\qquad \now \event \withat{\tt notickA} \tostate \withat{\tt cont}
\\
\} \tostate \withat{\tt notickB}
\end{array}
\end{array}
\]
\item \ 
\vspace{-.6cm}
\[
\begin{array}{l@{\quad}l}
\begin{array}{l}
\!\!\!\! \text{for every } \Q : \DecJump(R_1, \Q', \Q''):
\\[.2cm]
\Qwithat: {\tt fAdecQ} \; \{
\\
\qquad \now +1 \event \withat{\tt ackdec}_1 \tostate \Qwithat''\_{\tt start1}
\\
\qquad \now +1 \event \withat{\tt s1notick} \tostate {\tt cont}
\\
\qquad \now \event \withat{\tt cont} \tostate \withat{\tt dec}_1
\\
\}  \tostate \withat{\tt notickA}
\\
\\
\\
\\[.2cm]
\Qwithat: {\tt fBdecQ} \; \{
\\
\qquad \now +1 \event \withat{\tt ackdec}_1 \tostate \Qwithat''\_{\tt start1}
\\
\qquad \now +1 \event \withat{\tt s1notick} \tostate \withat{\tt cont}
\\
\qquad \now \event \withat{\tt cont} \tostate \withat{\tt dec}_1
\\
\} \tostate \withat{\tt notickB}
\\
\\
\\
\\[.2cm]
\Qwithat: {\tt fAzeroQ} \; \{
\\
\qquad \now \event \withat{\tt cont} \tostate \withat{\tt dec}_1
\\
\qquad \now +2 \event \withat{\tt dec}_1 \tostate \withat{\tt dec}_2
\\
\qquad \now +3 \event \withat{\tt ackdec}_2 \tostate \withat{\tt copy}_2
\\
\qquad \now +3 \event \withat{\tt c2notickA} \tostate \withat{\tt cont}
\\
\qquad  \now +4 \event \withat{\tt dec_2} \tostate \Qwithat'
\\
\qquad  \now +5 \event \withat{\tt notickB} \tostate \withat{\tt cont}
\\
\} \tostate \withat{\tt notickA}
\\[.2cm]
\Qwithat: {\tt fBzeroQ} \; \{
\\
\qquad       \now \event \withat{\tt cont} \tostate \withat{\tt dec}_1
\\
\qquad       \now +2 \event \withat{\tt dec_1} \tostate \withat{\tt dec}_2
\\
\qquad       \now +3 \event \withat{\tt ackdec_2} \tostate \withat{\tt copy}_2
\\
\qquad       \now +3 \event \withat{\tt c2notickA} \tostate \withat{\tt cont}
\\
\qquad       \now +4 \event \withat{\tt dec_2} \tostate \Qwithat'
\\
\qquad       \now +5 \event \withat{\tt notickA} \tostate \withat{\tt cont}
\\
\} \tostate \withat{\tt notickB}
\end{array}
&
\begin{array}{l}
\!\!\!\! \text{for every } \Q : \DecJump(R_2, \Q', \Q''):
\\[.2cm]
\Qwithat: {\tt fAdecQ} \; \{
\\
\qquad \now \event \withat{\tt cont} \tostate \withat{\tt dec}_1
\\
\qquad \now +1 \event \withat{\tt ackdec}_1 \tostate \withat{\tt copy}_1
\\
\qquad \now +1 \event \withat{\tt c1notickA} \tostate {\tt cont}
\\
\qquad \now +2 \event \withat{\tt dec}_1 \tostate {\tt dec}_2
\\
\qquad \now +3 \event \withat{\tt ackdec}_2 \tostate \Qwithat''{\tt \_start2}
\\
\qquad \now +3 \event \withat{\tt s2notick} \tostate \withat{\tt cont}
\\
\}  \tostate \withat{\tt notickA}
\\[.2cm]
\Qwithat: {\tt fBdecQ} \; \{
\\
\qquad \now \event \withat{\tt cont} \tostate \withat{\tt dec}_1
\\
\qquad \now +1 \event \withat{\tt ackdec}_1 \tostate \withat{\tt copy}_1
\\
\qquad \now +1 \event \withat{\tt c1notickA} \tostate {\tt cont}
\\
\qquad \now +2 \event \withat{\tt dec}_1 \tostate {\tt dec}_2
\\
\qquad \now +3 \event \withat{\tt ackdec}_2 \tostate \Qwithat''{\tt \_start2}
\\
\qquad \now +3 \event \withat{\tt s2notick} \tostate \withat{\tt cont}
\\
\}  \tostate \withat{\tt notickB}
\\[.2cm]
\Qwithat: {\tt fAzeroQ} \; \{
\\
\qquad \now \event \withat{\tt cont} \tostate \withat{\tt dec}_1
\\
\qquad \now +1 \event \withat{\tt ackdec}_1 \tostate \withat{\tt copy}_1
\\
\qquad \now +1 \event \withat{\tt c1notickA} \tostate \withat{\tt cont}
\\
\qquad \now +2 \event \withat{\tt dec}_1 \tostate \withat{\tt dec}_2
\\
\qquad  \now +4 \event \withat{\tt dec}_2 \tostate \Qwithat'
\\
\qquad  \now +5 \event \withat{\tt notickB} \tostate \withat{\tt cont}
\\
\} \tostate \withat{\tt notickA}
\\[.2cm]
\Qwithat: {\tt fBzeroQ} \; \{
\\
\qquad       \now  \event \withat{\tt cont} \tostate \withat{\tt dec}_1
\\
\qquad       \now +1 \event \withat{\tt ackdec_1} \tostate \withat{\tt copy}_1
\\
\qquad       \now +1 \event \withat{\tt c1notickA} \tostate \withat{\tt cont}
\\
\qquad       \now +2 \event \withat{\tt dec}_1 \tostate \withat{\tt dec}_2
\\
\qquad       \now +4 \event \withat{\tt dec}_2 \tostate \Qwithat'
\\
\qquad       \now + 5\event \withat{\tt notickA} \tostate \withat{\tt cont}
\\
\} \tostate \withat{\tt notickB}
\end{array}
\end{array}
\]
\item
the management functions in Table~\ref{tab.encodingD_addendum}.
\end{itemize}
}
\hrulefill
\caption{\label{tab.encodingD} The {\miniStipulaD} contract modelling a Minsky machine}
\end{table}
\begin{table}[t]
\hrulefill
{\small
\[
\begin{array}{l@{\qquad}l}
\begin{array}{l}
\withat{\tt Q\_start1}:  {\tt fQstart1} \; \{ 
\\
\qquad \now    \event \withat{\tt ackdec_1} \tostate \withat{\tt copy}_1
\\ \qquad \now    \event \withat{\tt cont} \tostate \withat{\tt dec}_1
\\ \qquad \now    \event \withat{\tt c1notickA} \tostate \withat{\tt cont}
\\ \qquad \now +1 \event \withat{\tt dec}_1 \tostate \withat{\tt dec}_2
\\ \qquad \now +2 \event \withat{\tt ackdec}_2 \tostate \withat{\tt copy}_2
\\ \qquad \now +2 \event \withat{\tt c2notickA} \tostate \withat{\tt cont}
\\ \qquad \now +3 \event \withat{\tt dec}_2 \tostate \Qwithat
\\ \qquad \now +4 \event \withat{\tt notickA} \tostate \withat{\tt cont}
\\
\} \tostate \withat{\tt s1notick}
\\[.2cm]
\withat{\tt copy}_1: {\tt fAcopy1} \; \{
\\ \qquad \now  \event \withat{\tt ackdec}_1 \tostate \withat{\tt copy}_1
\\ \qquad \now    \event \withat{\tt cont} \tostate \withat{\tt dec}_1
\\ \qquad \now    \event \withat{\tt c1notickB} \tostate \withat{\tt cont}
\\ \qquad \now +5 \event \withat{\tt dec}_1 \tostate \withat{\tt ackdec}_1
\\
\} \tostate \withat{\tt c1notickA}
\\[.2cm]
\withat{\tt copy}_2: {\tt fAcopy2} \; \{
\\ \qquad \now  \event \withat{\tt ackdec}_2 \tostate \withat{\tt copy}_2
\\ \qquad \now    \event \withat{\tt cont} \tostate \withat{\tt dec}_2
\\ \qquad \now    \event \withat{\tt c2notickB} \tostate \withat{\tt cont}
\\ \qquad \now +5 \event \withat{\tt dec}_2 \tostate \withat{\tt ackdec}_2
\\
\} \tostate \withat{\tt c2notickA}
\end{array}
&
\begin{array}{l}
\withat{\tt Q\_start2}: {\tt fQstart2} \; \{
\\ \qquad \now    \event \withat{\tt ackdec}_2 \tostate \withat{\tt copy}_2
\\ \qquad \now    \event \withat{\tt cont} \tostate \withat{\tt dec}_2
\\ \qquad \now    \event \withat{\tt c2notickA} \tostate \withat{\tt cont}
\\ \qquad \now +1 \event \withat{\tt dec}_2 \tostate \Qwithat
\\ \qquad \now +2 \event \withat{\tt notickA} \tostate \withat{\tt cont}
\\
\} \tostate \withat{\tt s2notick}
\\
\\
\\
\\[.2cm]
\withat{\tt copy}_1: {\tt fBcopy1} \; \{
\\ \qquad \now  \event \withat{\tt ackdec}_1 \tostate \withat{\tt copy}_1
\\ \qquad \now    \event \withat{\tt cont} \tostate \withat{\tt dec}_1
\\ \qquad \now    \event \withat{\tt c1notickA} \tostate \withat{\tt cont}
\\ \qquad \now +5 \event \withat{\tt dec}_1 \tostate \withat{\tt ackdec}_1
\\
\} \tostate \withat{\tt c1notickB}
\\[.2cm]
\withat{\tt copy}_2: {\tt fBcopy2} \; \{
\\ \qquad \now  \event \withat{\tt ackdec}_2 \tostate \withat{\tt copy}_2
\\ \qquad \now    \event \withat{\tt cont} \tostate \withat{\tt dec}_2
\\ \qquad \now    \event \withat{\tt c2notickA} \tostate \withat{\tt cont}
\\ \qquad \now +5 \event \withat{\tt dec}_2 \tostate \withat{\tt ackdec}_2
\\
\} \tostate \withat{\tt c2notickB}
\end{array}
\end{array}
\]
} 
\hrulefill

\caption{\label{tab.encodingD_addendum} The management functions of Table~\ref{tab.encodingD} ($\Q$ is 
a state of the Minsky machine)}
\end{table}

Clearly,  misplaced \rulename{Tick} transitions may break the rigidity of the protocol. 
This means that it is necessary to stop the simulation if a wrong \rulename{Tick} 
transition is performed. We already used a similar mechanism in Table~\ref{tab.encodingI}.
In that case, a {\control}
event at time 0 (that is created in the past transition) is necessary to simulate the 
Minsky machine transition; in turn, the simulation creates a similar {\control} event for the next one.
Therefore, if a tick occurs before the invocation of a function (thus erasing registers' values that were
events at time 0, as well) then the simulation stops because the {\control} event is also erased.

A similar expedient cannot be used for the {\miniStipulaD} encoding because the registers' values
are at different times (+1 and +3 with respect to the machine state) and erasing the {\control} event 
with a tick may be useless if the {\miniStipulaD} function produces an equal {\control} event, which is 
the case when the corresponding transition is circular (initial and final states are the same).
Therefore we refine the technique in Table~\ref{tab.encodingI} by adding sibling functions and the
simulation uses standard functions or sibling ones according to the presence of the {\control} event 
$0 \event {\tt notickA} \tostate {\tt cont}$ or of $0 \event {\tt notickB} \tostate {\tt cont}$. For
example, the encoding of $\Q : \Inc(R_1, \Q')$ is (the events of register $R1$ are at $\now + 1$):
\[
\begin{array}{l@{\qquad\qquad}l}
\begin{array}{l}
\Qwithat: {\tt fAincQ} \; \{
\\
\qquad \now +1 \event \withat{\tt dec}_1 \tostate \withat{\tt ackdec}_1
\\
\qquad \now \event \withat{\tt cont} \tostate \Qwithat'
\\
\qquad \now \event \withat{\tt notickB}  \tostate \withat{\tt cont}
\\
\} \tostate \withat{\tt notickA}
\end{array}
&
\begin{array}{l}
\Qwithat: {\tt fBincQ} \; \{
\\
\qquad \now +1 \event \withat{\tt dec}_1 \tostate \withat{\tt ackdec}_1
\\
\qquad \now \event \withat{\tt cont} \tostate \Qwithat'
\\
\qquad \now \event \withat{\tt notickA} \tostate \withat{\tt cont}
\\
\} \tostate \withat{\tt notickB}
\end{array}
\end{array}
\]
Assuming to be in a configuration with state $\Q$ and event $0 \event {\tt notickA} \tostate {\tt cont}$,
the unique function that can be invoked is {\tt fAincQ}; thereafter, the combined effect of 
$0 \event {\tt notickA} \tostate {\tt cont}$ and $0 \event {\tt cont} \tostate \Q'$ allows the
contract to transit to $\Q'$ with an additional event 
$1 \event {\tt dec}_1 \tostate {\tt ackdec}_1$ (corresponding to a register increment) and 
the presence of $0 \event {\tt notickB} \tostate {\tt cont}$ that compels the next instruction,
if any, to be a sibling one (\emph{e.g.}~${\tt fBincQ}'$).

Tables~\ref{tab.encodingD} and~\ref{tab.encodingD_addendum} define the encoding of a Minsky machine $M$ into a 
{\miniStipulaD} contract ${\tt D}_M$. The reader may notice that the management functions {\tt fQstart1}
and {\tt fQstart2} in 
Table~\ref{tab.encodingD_addendum} do not have sibling functions -- they always produce a {\control} event 
${\tt k} \event {\tt notickA} \tostate {\tt cont}$ (with {\tt k} be either 2 or 4). Actually this is an optimization: they are invoked  
because a decrement occurred and, in such cases it is not possible that the foregoing {\control} events 
are used in the simulation of the current instruction. The undecidability result for {\miniStipulaD} follows.

\Reviewer{2}{Because of the intricacies in the encoding, one wonders whether a paper proof is enough. Some kind of mechanised proof would be welcome to convince the reader. Alternatively, it would be helpful to break down reasoning about the semantics of the contract into several steps, thus increasing modularity. Could you comment on these two potential routes? Which one would you take to increase confidence in the result about SD?
}
\Answer{The proofs, theoretically, are not so difficult and all use the same technique:
defining a simulation relation and demonstrating its correctness. Since the protocol
of {\miniStipulaD} is entangled, the proof is much longer because it has to analyze all the
cases. We have written the cases, but it takes time to 
write it down in latex. As regards the automatic proof, it is a good idea but, honestly,
we do not have the competences.}

\begin{restatable}{theorem}{thmdeterminate}
\label{thm.determinate}
State reachability is undecidable in {\miniStipulaD}.
\end{restatable}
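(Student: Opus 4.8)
The plan is to reduce the halting problem of Minsky machines to state reachability in $\miniStipulaD$, using the contract ${\tt D}_M$ defined in Tables~\ref{tab.encodingD} and~\ref{tab.encodingD_addendum}. As a preliminary check I would verify that ${\tt D}_M$ really lies in $\miniStipulaD$: every function starts either in a machine state $\Q$ or in one of the auxiliary states ${\tt copy}_1, {\tt copy}_2, {\tt Q\_start1}, {\tt Q\_start2}$ and their relatives, whereas every event starts in one of the states ${\tt dec}_i, {\tt ackdec}_i, {\tt cont}, {\tt notickA}, {\tt notickB}$ and their relatives, so the two sets of initial states are disjoint. The reduction then shows that the state named $\Q_F$ is reachable from the initial configuration of ${\tt D}_M$ if and only if $M$ halts, i.e. reaches $(\Q_F, v_1, v_2)$ for some $v_1, v_2$ starting from $(\Q_0, 0, 0)$; since halting is undecidable~\cite{Minsky}, undecidability of state reachability follows.

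The central ingredient is a representation invariant. A Minsky configuration $(\Q, v_1, v_2)$ with $\Q$ a machine state is represented by any contract configuration ${\tt D}_M(\Q \semi \zero \semi \Psi)$ in which $\Psi$ contains exactly $v_1$ copies of $1 \event \withat{\tt dec}_1 \tostate \withat{\tt ackdec}_1$, exactly $v_2$ copies of $3 \event \withat{\tt dec}_2 \tostate \withat{\tt ackdec}_2$, and exactly one management event, either $0 \event \withat{\tt notickA} \tostate \withat{\tt cont}$ or $0 \event \withat{\tt notickB} \tostate \withat{\tt cont}$, whose $A/B$ parity records the family of functions that must be used next. The register units thus live at the two distinct clock offsets $+1$ and $+3$, and this is exactly what forces the five-tick cadence ``machine $\to$ transfer1 $\to$ control1 $\to$ transfer2 $\to$ control2'' sketched before the tables.

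For soundness (forward simulation) I would prove that every step $(\Q, v_1, v_2) \lred{}_{\tt M} (\Q', v_1', v_2')$ is matched by a computation ${\tt D}_M(\Q \semi \zero \semi \Psi) \lred{}^* {\tt D}_M(\Q' \semi \zero \semi \Psi')$ that preserves the invariant and flips the management parity; the base case is the invocation of {\tt fstart}, which reaches $\Q_0$ with no register event and management parity $A$, representing $(\Q_0,0,0)$. An increment is instantaneous: the appropriate $A$- or $B$-variant adds one register event at the correct offset, fires the chained management and continuation events, transits to $\Q'$, and leaves the dual management event pending. A decrement-or-jump is simulated by the transfer protocol: I would trace the copy functions ${\tt fAcopy}_i/{\tt fBcopy}_i$ and the starter functions ${\tt fQstart1}/{\tt fQstart2}$, checking that the whole multiset of register events is shifted five clock units ahead, that exactly one unit of the tested register is consumed when it is positive, and that the jump targets $\Q'$ or $\Q''$ according to whether the register is zero or positive.

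The hard part, and the main obstacle, is faithfulness: no computation issuing from a representing configuration can reach a fresh machine-state configuration except by following the intended simulation. This requires an exhaustive case analysis of the possible deviations, and it is long precisely because the $\miniStipulaD$ protocol is more entangled than those of $\miniStipulaI$ (Table~\ref{tab.encodingI}) and $\miniStipulaTA$, the register events living at two different offsets. The deviations to rule out are: (i) a misplaced \rulename{Tick}, which the forced $A/B$ alternation of the management event is designed to catch --- on a circular instruction a tick erasing the management event could otherwise be masked by regenerating an identical one, but the mandatory parity flip exposes it; (ii) invoking a decrement function on an empty register, after which the state $\withat{\tt ackdec}_i$ needed to acknowledge the decrement is never reached; (iii) invoking a zero-test function on a nonempty register; and (iv) choosing the wrong-parity function, after which the contract enters a ${\tt notickA}$ or ${\tt notickB}$ state where the pending management event cannot fire, so that only a register-corrupting \rulename{Tick} remains. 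In each deviant case I would argue that ${\tt D}_M$ can no longer reach any configuration ${\tt D}_M(\Q'' \semi \zero \semi \Psi'')$ with $\Q''$ a machine state and $\Psi''$ of the prescribed shape: the contract may keep moving (at least by ticks) but only into dead configurations from which the management events have disappeared. Combining soundness and faithfulness gives that $\Q_F$ is reachable in ${\tt D}_M$ if and only if $M$ halts, which establishes the theorem.
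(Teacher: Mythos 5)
Your proposal takes essentially the same route as the paper's proof: the same encoding ${\tt D}_M$ of Tables~\ref{tab.encodingD} and~\ref{tab.encodingD_addendum}, the same representation invariant (the paper's $\sem{v_1,v_2}_{\tt X}^{\tt D}$: register units at offsets $+1$ and $+3$ plus a single $A/B$ management event), the same two obligations --- forward simulation and the guarantee that deviant computations never reach a fresh machine-state configuration --- and the same conclusion via the halting problem, so the approaches coincide. One minor imprecision worth noting: the management parity does not flip at \emph{every} simulated step (the starter functions ${\tt fQstart1}$/${\tt fQstart2}$ always emit a ${\tt notickA}$ event, as the paper remarks when calling this an optimization), but this is harmless since the invariant, like the paper's property (1), is parameterized over both parities.
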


\section{Decidability results for {\miniStipulaDI}}
\label{sec:det_instantaneous}

%
\Reviewer{1}{the definition of coverability for well-structured transition systems (WSTSs) for the decidability result is flawed, raising doubts. Also, the used terminology is highly misleading. The authors do actually consider the clause reachability problem and not a general reachability problem for configurations of the underlying transition system. For instance, the latter might not be decidable for WSTS but the coverability problem is. It might be that the main definition (Def. 1) should not say "does not contain" but "does contain". }
\Answer{repaired}

We demonstrate that 
state reachability is decidable for {\miniStipulaDI} by reasoning on a variant  with an alternative 
\rulename{Tick} rule. 
We recall that, in {\miniStipulaDI}, for every $\Q \; \f \; \Q', \; \Q'' \; \ev \; \Q'''  \in \C$, 
we have $\Q \neq \Q''$

Let $\InitEv(\C)$ be the set of initial states of
events in $\C$, where $\C$ is a {\miniStipula} contract. Let 
{\small\[
\mathrule{Tick-Plus}{
	\Q \notin \InitEv(\C)
	}{
	\pairbis{\C(\Q \semi \zero \semi \SemEvent)}{\time} \lred{} 
	\pairbis{\C(\Q  \semi \zero \semi \SemEvent \downarrow)}{\time + 1}
	}
\]}

\medskip

\noindent
That is, unlike \rulename{Tick}, \rulename{Tick-Plus} may only be used in states that are not initial
states of events. Let {\miniStipulaDIP} be the language whose operational semantics uses \rulename{Tick-Plus}
instead of \rulename{Tick}; we denote with $\lredTickP{}$ the transition relation 
of {\miniStipulaDIP}. We observe that, syntactically, nothing is changed: every {\miniStipulaDI} contract is a {\miniStipulaDIP} contract
and conversely. 
We denote by $\TStp(\C)=(\mathcal{C}_\C,\lredTickP{})$ the transitions system associated to contract $\C$ using $\lredTickP{}$ as transition rule.

\Reviewer{2}{
Would it make sense to rely on Proposition 1 for the undecidability results in Section 3? Would this simplify some proofs?}
\Answer{We already rely on it: see Corollary 1}

\begin{definition}
Let $\contract$ be a possible configuration of a {\miniStipula} contract. We say that $\contract$ is \emph{stuck}
if, for every computation $\contract \lred{}^* \contract'$ the transitions therein are always instances of 
\rulename{Tick}.
\end{definition}

\begin{restatable}{proposition}{propcorrespondence}
\label{prop.correspondence} 
Let $\pairbis{\C(\Q \semi \Sigma \semi \Psi)}{\time}$ be a configuration of a {\miniStipulaDI} contract 
$\C$ (or a {\miniStipulaDIP} contract).
Then 
\begin{enumerate}
\item[(i)] whenever $\Q \notin \InitEv(\C)$ or $\Sigma \neq \zero$:
\[
\pairbis{\C(\Q \semi \Sigma \semi \Psi)}{\time} \lred{} \contract \quad \text{ if and only if } \quad
\pairbis{\C(\Q \semi \Sigma \semi \Psi)}{\time} \lredTickP{} \contract \; ;
\]
\item[(ii)]
whenever $\Q \in \InitEv(\C)$ and $\Psi = {\tt 0} \event_{\! {\tt n}} \Q \tostate
\Q' ~|~ \Psi'$:
\[
\pairbis{\C(\Q \semi \zero \semi \Psi)}{\time} \lred{} \contract
 \quad \text{ if and only if } \quad
\pairbis{\C(\Q \semi \zero \semi \Psi)}{\time} \lredTickP{} \contract \; ;
\]
\item[(iii)]
whenever $\Q \in \InitEv(\C)$  and $\noredbis{\SemEvent, \Q}$ :
\[
\pairbis{\C(\Q \semi \zero \semi \Psi)}{\time} \; \text{ is stuck}
 \quad \text{ if and only if } \quad
\pairbis{\C(\Q \semi \zero \semi \Psi)}{\time} \nolredTickP{} .
\]

\end{enumerate}
\end{restatable}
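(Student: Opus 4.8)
The plan is to prove the three clauses by a uniform case analysis that exploits the fact that {\miniStipulaDI} and {\miniStipulaDIP} share the rules \rulename{Function}, \rulename{Event-Match}, and \rulename{State-Change} verbatim, and differ only in the side condition guarding time progression: \rulename{Tick} fires whenever $\noredbis{\SemEvent,\Q}$, whereas \rulename{Tick-Plus} fires whenever $\Q \notin \InitEv(\C)$. Thus, for any configuration $\pairbis{\C(\Q\semi\Sigma\semi\SemEvent)}{\time}$, the two transition relations can disagree only through their tick rules, and only when the two guards disagree. Before the case analysis I would record three auxiliary facts. First, every pending event of $\SemEvent$ is an instance of an event declared in $\C$, so its initial state lies in $\InitEv(\C)$; consequently, if $\Q \notin \InitEv(\C)$ then $\noredbis{\SemEvent,\Q}$ holds automatically. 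Second, in {\miniStipulaD} the determinacy constraint makes $\InitEv(\C)$ disjoint from the set of initial states of functions, so if $\Q \in \InitEv(\C)$ no function can be invoked at $\Q$. Third, in {\miniStipulaI} every event is scheduled from a time expression $\now+0$, hence carries time value $0$; this is an invariant of every computation, so $\SemEvent\downarrow = \zero$ after any application of a tick rule.

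For clause (i) I would split on $\Sigma$. If $\Sigma \neq \zero$ then the only applicable rule in either semantics is \rulename{State-Change}, which is common to both, so the transitions coincide. If $\Sigma = \zero$ and $\Q \notin \InitEv(\C)$, then by the first fact $\noredbis{\SemEvent,\Q}$ holds, \rulename{Event-Match} is disabled, and both \rulename{Tick} and \rulename{Tick-Plus} are enabled and produce the identical target $\pairbis{\C(\Q\semi\zero\semi\SemEvent\downarrow)}{\time+1}$; since \rulename{Function} is shared, the outgoing transitions again coincide. Clause (ii) is immediate: under its hypotheses $\noredbis{\SemEvent,\Q}$ fails (the event $0\event_{\!\n}\Q\tostate\Q'$ can fire) and $\Q\in\InitEv(\C)$, so both \rulename{Tick} and \rulename{Tick-Plus} are disabled, \rulename{Function} is disabled by the second fact, and the sole enabled rule in both semantics is the shared \rulename{Event-Match}.

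Clause (iii) is the substantive case, and I would prove the biconditional by showing that, under its hypotheses, each side holds outright. For $\nolredTickP{}$: with $\Sigma=\zero$ and $\noredbis{\SemEvent,\Q}$, both \rulename{Event-Match} and \rulename{State-Change} are disabled; the second fact disables \rulename{Function} because $\Q\in\InitEv(\C)$; and \rulename{Tick-Plus} is disabled because $\Q\in\InitEv(\C)$; hence no $\lredTickP{}$ step exists. For stuckness I would argue that in the original semantics the only enabled rule is \rulename{Tick}, and that firing it yields $\pairbis{\C(\Q\semi\zero\semi\SemEvent\downarrow)}{\time+1}$ which, by the third fact, equals $\pairbis{\C(\Q\semi\zero\semi\zero)}{\time+1}$; this configuration has the same state $\Q\in\InitEv(\C)$ and trivially satisfies $\noredbis{\zero,\Q}$, so it again falls within the present case. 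A straightforward induction then shows that every configuration reachable from $\pairbis{\C(\Q\semi\zero\semi\SemEvent)}{\time}$ offers only \rulename{Tick} steps, which is exactly stuckness.

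The main obstacle is the stuckness half of clause (iii): it is precisely here that the two defining restrictions of {\miniStipulaDI} are both indispensable and must be combined. Determinacy is needed so that no function can rescue the computation at an event-initial state $\Q$, and instantaneity (via $\SemEvent\downarrow=\zero$) is needed so that a tick cannot lower some pending event to time value $0$ at $\Q$ and thereby re-enable \rulename{Event-Match} after progression; without the latter invariant the biconditional would fail, since one could exhibit a configuration with $\nolredTickP{}$ that nonetheless admits a non-\rulename{Tick} continuation. I would therefore be careful to establish the time-$0$ invariant first, and to check that the state component is left unchanged by \rulename{Tick}, so that the inductive argument stays within the current case at every step.
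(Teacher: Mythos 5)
Your proof is correct and follows essentially the same route as the paper's: a rule-by-rule elimination showing that under the hypotheses of (iii) only \rulename{Tick} can fire in {\miniStipulaDI} (leading to $\C(\Q\semi\zero\semi\zero)$, which then loops on \rulename{Tick}), while no rule at all fires in {\miniStipulaDIP}, with (i) and (ii) settled by case analysis on the rules shared by the two semantics. If anything, you are more explicit than the paper, which declares (i)--(ii) straightforward and omits them, and which uses the instantaneity invariant $\SemEvent\downarrow=\zero$ silently, whereas you state it as a named fact and correctly observe that the biconditional in (iii) would fail without it.
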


A consequence of Proposition~\ref{prop.correspondence} is that a state $\Q$ 
is reachable in {\miniStipulaDI} if and only if it is reachable in {\miniStipulaDIP}.
This allows us to safely reduce state reachability arguments to {\miniStipulaDIP}.
In particular we demonstrate that $\TStp(\C)$, where $\C$ is a  {\miniStipulaDIP} contract, is a well-structured transition system.
 
We begin with some background on well-structured transition systems~\cite{Finkel:2001}.
A relation $\le \subseteq X \times X$ is called \emph{quasi-ordering} if it is 
reflexive and transitive.
A \emph{well-quasi-ordering} is a quasi-ordering $\le \subseteq X \times X$ such that, for every infinite sequence $x_1, x_2, x_3, \cdots$, there exist $i < j$ with $x_i \leq x_j$.

\begin{definition}
\label{def.wsts}
A \emph{well-structured transition system} 
is a tuple $(\mathcal{C}, \lred{}, \preceq)$ where $(\mathcal{C}, \lred{})$ is 
a transition system and $\preceq \subseteq \mathcal{C} \times \mathcal{C}$ is 
a quasi-ordering such that:
\begin{enumerate}
\item[(1)] 
$\preceq$ is a well-quasi-ordering
\item[(2)] 
$\preceq$ is upward compatible with $\lred{}$, i.e., for every 
$\contract_1,\contract'_1,\contract_2 \in \mathcal{C}$ such that $\contract_1 
\preceq \contract_1'$ and $\contract_1 \lred{} \contract_2$ there exists 
$\contract_2'$ in $\mathcal{C}$ verifying $\contract_1' 
\lred{}^* \contract_2'$ and $\contract_2 \preceq \contract_2'$
\end{enumerate}
\end{definition}

\Reviewer{1}{clash in symbol use for set of configurations}
\Answer{Repaired}

Given a configuration $\contract$ of a well-structured transition system, 
$\Pred(\contract)$ denotes the set of immediate predecessors of $\contract$ (\emph{i.e.}, 
$\Pred(\contract) = \{ \contract' \; | \; \contract' \lred{} \contract \; \}$) 
while $\uparrow \contract$ denotes the set of configurations greater than $\contract$ (\emph{i.e.},
$\uparrow \contract = \{ \; \contract' \; | \; \contract \preceq \contract' \; \}$). 
A \emph{basis} of an upward-closed set of configurations 
$\mathcal{D} \subseteq \mathcal{C}$ is a set $\mathcal{D}^\flat$
such that $\mathcal{D} = \cup_{\contract \in \mathcal{D}^\flat} \uparrow \! \contract$. 
We know that every upward-closed set of a well-quasi-ordering 
admits a finite basis~\cite{Finkel:2001}.
With abuse of notation, we will denote with $\Pred( \cdot )$ also its natural extension to sets of 
configurations.

Several properties are decidable for well-structured transition systems (under some conditions discussed below)~\cite{Abdulla1996,Finkel:2001}, we
 will consider the following one.

\Reviewer{3}{the definition of the coverability problem (p. 16) seems wrong. The correct definition of the coverability problem should have the form:
given $c,d$, decide whether there exists $d'$ such that $c \longrightarrow^* d' \succeq 
 d$
}
\Answer{Repaired}

\Reviewer{3}{
the definition of $\longrightarrow^*$ is ambiguous. Depending on whether it is possible to have h=0 or not, the relation is reflexive (as it usually is) or not. Assuming the usual reflexive relation, however, would make Definition 4 trivial (however, this definition is probably wrong, see the previous point). Furthermore, if $\longrightarrow^*$ is reflexive, then the argument given after Lemma 1 seems false, as to prove upward compatibility it should be enough to take $c = c' = (Q,\_,0 \event_n Q \tostate Q')$. }
\Answer{Repaired}

\begin{definition}
Let $(\mathcal{C}, \lred{}, \preceq)$ be  a well-structured transition system.
The \emph{coverability problem} is to decide, given the initial configuration 
$\contractin\in \mathcal{C}$ 
and a target configuration $\contract \in \mathcal{C}$, 
whether there exists a configuration $\contract' \in \mathcal{C}$ such that 
$\contract \preceq \contract'$ and  $\contractin \lred{}^* \contract'$.
\end{definition}
In well-structured transition systems the coverability problem is decidable when the transition 
relation $\lred{}$, the ordering $\preceq$ and a finite-basis for the set of configurations $\Pred(\uparrow \contract)$ 
are effectively computable. 

Let us now define the relation $\preceq$  as the least quasi-ordering relation such that $\Psi \preceq \Psi ~|~ \Psi'$ for every $\Psi'$.
The relation $\preceq$ is lifted to configurations as follows
\[
\pairbis{\C(\Q \semi \Sigma \semi \Psi)}{\time} \preceq \pairbis{\C(\Q \semi \Sigma \semi \Psi')}{\time}
\quad \text{if} \quad 
\Psi \preceq \Psi'\; .
\]
\Reviewer{3}{
the paragraph ``It is worth to observe [...]'' is unclear: please check carefully and rephrase
}
\Answer{Repaired}
It is worth to observe that, according to the relation $\preceq$, 
the state reachability problem for $\Q$ is equivalent to the coverability problem 
for the initial configuration $\contractin$ and  the target configuration 
$\C(\Q, \zero, \zero)$.

\Reviewer{1}{Lm.1: "Its transition system" is not properly defined}
\Answer{Repaired}

\begin{restatable}{lemma}{lemwellstructuredTP}
\label{lem.wellstructuredTP}
For a {\miniStipulaDIP} contract $\C$, $(\mathcal{C}_\C, \lredTickP{}, \preceq)$  is a well-structured transition system.
\end{restatable}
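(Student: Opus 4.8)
The plan is to verify the two defining conditions of Definition~\ref{def.wsts} for the triple $(\mathcal{C}_\C, \lredTickP{}, \preceq)$: that $\preceq$ is a well-quasi-ordering, and that $\preceq$ is upward compatible with $\lredTickP{}$. The starting observation is that a configuration $\C(\Q \semi \Sigma \semi \Psi)$ splits into a \emph{finite-control} part $(\Q,\Sigma)$ and a \emph{multiset} part $\Psi$: the state $\Q$ ranges over the finitely many states of $\C$; the component $\Sigma$ ranges over the finite set consisting of $\zero$, of the terms $\zero \tostate \Q'$, and of the terms $\linecode{\clause{\Q}{\f}{\Q'}}{W} \tostate \Q'$ coming from the (finitely many) function bodies $\Qwithat \;\f\, \{\,W\,\} \,\tostate\, \Qwithat'$ of $\C$; and, since $\C$ is instantaneous, every pending event carries the guard $0$, so $\Psi$ is a finite multiset over the finite alphabet $\mathcal{A}$ of time-$0$ events of $\C$. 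The global clock is a spectator: it never guards a transition and is abstracted by $\preceq$, so it plays no role in either condition and may be projected away.

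For condition~(1) I would argue that $\preceq$ is a well-quasi-ordering. The control part $(\Q,\Sigma)$ is ordered by equality, which is a well-quasi-ordering on a finite set by the pigeonhole principle; the multiset part ordered by submultiset inclusion is isomorphic to $(\mathbb{N}^{|\mathcal{A}|}, \le)$ with the componentwise order, which is a well-quasi-ordering by Dickson's lemma. Given any infinite sequence of configurations, I would first extract by pigeonhole an infinite subsequence sharing one fixed control part $(\Q,\Sigma)$, then apply Dickson's lemma to the associated $\Psi$-components to obtain indices $i<j$ with $\Psi_i \preceq \Psi_j$; the corresponding configurations are then $\preceq$-comparable, which is exactly what the well-quasi-ordering requires.

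For condition~(2) I would proceed by cases on the rule justifying $\contract_1 \lredTickP{} \contract_2$, writing $\contract_1 = \C(\Q \semi \Sigma \semi \Psi)$ and $\contract_1' = \C(\Q \semi \Sigma \semi \Psi \mathbin{|} \Psi'')$ with $\Psi \preceq \Psi \mathbin{|} \Psi''$, and showing that in each case the \emph{same} rule fires from $\contract_1'$ in one step and reestablishes $\preceq$. For \rulename{State-Change} this is immediate, since the rule has no premise on the pending multiset and the surplus $\Psi''$ is carried along unchanged. For \rulename{Event-Match}, the event fired from $\contract_1$ is still present in $\Psi \mathbin{|} \Psi''$, so it fires from $\contract_1'$ leaving the same surplus. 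For \rulename{Tick-Plus}, the guard $\Q \notin \InitEv(\C)$ holds for both (the state is the same), and because all guards are $0$ we have $\Psi \downarrow = (\Psi \mathbin{|} \Psi'') \downarrow = \zero$, so the two successors coincide. The decisive case is \rulename{Function}, and it is here that the {\miniStipulaD} hypothesis is used: the rule fires from $\contract_1$ because $\noredbis{\Psi, \Q}$ holds, i.e.\ no time-$0$ event of $\Psi$ has initial state $\Q$. Enlarging $\Psi$ to $\Psi \mathbin{|} \Psi''$ could \emph{a priori} introduce an enabled event with initial state $\Q$ which, since events preempt function invocation, would block the function in $\contract_1'$; but $\Q$ is the initial state of the invoked function, and in a {\miniStipulaD} contract the initial states of functions and of events are disjoint, so $\Q \notin \InitEv(\C)$ and no event of $\C$ (in particular none in $\Psi''$) has initial state $\Q$. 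Hence $\noredbis{\Psi \mathbin{|} \Psi'', \Q}$ holds too, the same function fires, and its successor differs from that of $\contract_1$ only by the surplus $\Psi''$, so $\preceq$ is reestablished (indeed one obtains strong compatibility throughout).

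The main obstacle is precisely this \rulename{Function} case: upward compatibility would fail for unrestricted {\miniStipulaI} contracts, where a larger pending multiset can enable a preempting event in a function's initial state, and the determinacy constraint of {\miniStipulaD} is exactly what excludes this. The remaining, more routine, point to get right is the treatment of the global clock, which I would discharge by observing that it never influences a guard and is invisible to $\preceq$, so the well-quasi-ordering argument over the finite control and the multiset part applies to the full configuration space. Assembling the two conditions yields that $(\mathcal{C}_\C, \lredTickP{}, \preceq)$ is a well-structured transition system.
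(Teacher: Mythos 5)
Your proof is correct and follows essentially the same route as the paper's: well-quasi-ordering via Dickson's lemma over multisets drawn from the finite set of events of $\C$, plus a rule-by-rule case analysis for upward compatibility, with strong (one-step) compatibility in every case. You are in fact more careful than the paper at the one delicate point: the paper dismisses the \rulename{Function} case as ``immediate,'' whereas you correctly observe that the premise $\noredbis{\Psi, \Q}$ must be re-established for the enlarged multiset and that this is exactly where the determinacy constraint is used (the invoked function's initial state $\Q$ satisfies $\Q \notin \InitEv(\C)$, so no surplus event can preempt it), and your pigeonhole step handling the finite control part $(\Q,\Sigma)$ before applying Dickson's lemma makes explicit a detail the paper leaves implicit.
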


It is worth to notice that Lemma~\ref{lem.wellstructuredTP} does not hold for {\miniStipulaDI} because $\lred{}$
is not upward compatible with $\preceq$. In fact, while $\C(\Q, \zero, \zero) \lred{} \C(\Q, \zero, \zero)$
with a rule \rulename{Tick} and $\C(\Q, \zero, \zero) \preceq \C(\Q, \zero, 0 \event_{\tt n} \Q \tostate \Q')$, 
the unique computation of $\C(\Q, \zero, 0 \event_{\tt n} \Q \tostate \Q')$ when $\Q' \in \InitEv(\C)$ is
(we recall that, in {\miniStipula}, events preempt function invocations and ticks):
\[
\C(\Q, \zero, 0 \event_{\tt n} \Q \tostate \Q') \lred{} 
\C(\Q, \zero \tostate \Q', \zero) \lred{} \C(\Q', \zero, \zero)
\]
and then it gets stuck. Therefore no configuration $\contract$
is reachable such that 
$\C(\Q, \zero, \zero) \preceq \contract$.

\Reviewer{1}{
The formalisation for this result is not quite solid. Most notably, the definition of coverability is always trivially satisfied by reflexivity - in fact, usually one requires an initial state from which a target state shall be covered - here, only one is considered. Also, for instance, the transition system of a contract is technically not defined and Lemma 1 is very liberal with the use of the ordering: first for multisets, then for contracts/configurations of contracts (which is also quite liberal). 
}
\Answer{The notion of coverability has been repaired, the transition system of a contract has been formally defined. We think that, now, the Lemma 1 is clearer.}

\begin{restatable}{lemma}{lemfinitepredbasis}
\label{lem.finitepredbasis}
%
Let $(\mathcal{C}, \lredTickP{}, \preceq)$ be the well-structured transition system of a {\miniStipulaDIP} contract. Then, $\lredTickP{}$ and
$\preceq$ are decidable and there exists an algorithm  for computing a finite basis of 
$\Pred(\uparrow \mathcal{D})$ for any finite $\mathcal{D} \subseteq \mathcal{C}$.
\end{restatable}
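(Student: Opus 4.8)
The plan is to verify the three requirements in turn, the decidability of $\preceq$ and of $\lredTickP{}$ being routine and the effective pred-basis requiring a backward reading of the rules. For $\preceq$, observe that two configurations are comparable only when they agree on the contract, the state $\Q$, the continuation $\Sigma$ and the clock; deciding $\C(\Q \semi \Sigma \semi \Psi) \preceq \C(\Q \semi \Sigma \semi \Psi')$ then amounts to testing the sub-multiset inclusion $\Psi \preceq \Psi'$, which is decidable since $\Psi,\Psi'$ are finite multisets over the finite set of event terms of $\C$. For $\lredTickP{}$, note that from any configuration only finitely many transitions are enabled, each determined by a rule with a decidable side condition (membership of a function or event in $\C$, the predicate $\noredbis{\SemEvent, \Q}$, and the test $\Q \notin \InitEv(\C)$); hence the finite set of successors is computable and $\contract \lredTickP{} \contract'$ is decided by membership. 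Finally, because $\uparrow\mathcal{D} = \bigcup_{\contract \in \mathcal{D}} \uparrow\contract$ and $\Pred$ commutes with unions, it suffices to compute a finite basis of $\Pred(\uparrow\contract)$ for a single target $\contract = \C(\Q_t \semi \Sigma_t \semi \Psi_t)$ with clock $\time_t$, and to take the finite union over $\mathcal{D}$.

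Two structural facts of {\miniStipulaDIP} drive the construction. Since every time expression is $\now + 0$, each pending event carries guard $0$, so $\SemEvent\!\downarrow = \zero$ always and a \rulename{Tick-Plus} step always reaches an empty pending multiset. Moreover $\Sigma$ ranges over the finite set of legal shapes, namely $\zero$, the continuations $\zero \tostate \Q'$, and the continuations $\Psi_W \tostate \Q'$ with $\Psi_W = \linecode{\clause{\Q}{\f}{\Q'}}{W}$ for a function of $\C$; these are the only shapes the rules create, and there are finitely many since $\C$ has finitely many functions, events and states. The determinate constraint is what keeps the predecessor sets upward closed: whenever $\Q_t$ is the initial state of a function, determinacy gives $\Q_t \notin \InitEv(\C)$, and since every pending event of a legal configuration has its initial state in $\InitEv(\C)$, the premise $\noredbis{\SemEvent, \Q_t}$ of \rulename{Function} and \rulename{Tick-Plus} then holds for every $\SemEvent$ and imposes no constraint.

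The core step reads each rule backwards against the lifted target $\C(\Q_t \semi \Sigma_t \semi \Psi_t ~|~ \Psi_r)$, with $\Psi_r$ arbitrary. Reversing \rulename{Function}, active only when $\Sigma_t = \Psi_W \tostate \Q'$ matches a function body, contributes the upward-closed set with basis the singleton $\C(\Q_t \semi \zero \semi \Psi_t)$. Reversing \rulename{Event-Match}, active when $\Sigma_t = \zero \tostate \Q'$ and $\clause{\Q_t}{\ev_{\tt n}}{\Q'} \in \C$, contributes the set with basis $\C(\Q_t \semi \zero \semi 0\event_{\tt n}\Q_t\tostate\Q' ~|~ \Psi_t)$. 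Reversing \rulename{Tick-Plus}, active when $\Sigma_t = \zero$, $\Psi_t = \zero$, $\Q_t \notin \InitEv(\C)$ and $\time_t \geq 1$, contributes, using $\SemEvent\!\downarrow = \zero$, the entire set of configurations in state $\Q_t$ at clock $\time_t - 1$, with basis $\C(\Q_t \semi \zero \semi \zero)$. Reversing \rulename{State-Change}, active when $\Sigma_t = \zero$, contributes, for each function or event of $\C$ with final state $\Q_t$ — which jointly fixes the source state $\Q$ and the produced body $\Psi_1$, finitely many in number — the set with basis $\C(\Q \semi \Psi_1 \tostate \Q_t \semi \Psi_t \ominus \Psi_1)$, where $\Psi_t \ominus \Psi_1$ is $\Psi_t$ with the events of $\Psi_1$ removed, truncated at zero. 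Each contribution is upward closed, there are finitely many of them, hence their union is a finite and effectively computable basis of $\Pred(\uparrow\contract)$.

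The main obstacle is the \rulename{State-Change} reversal. Taken over unrestricted configurations it would admit a predecessor $\C(\Q \semi \Psi_1 \tostate \Q_t \semi \SemEvent)$ for every multiset $\Psi_1$, and since configurations with distinct $\Sigma$ are $\preceq$-incomparable this yields an infinite antichain and no finite basis. Restricting $\Sigma$ to the finitely many legal shapes removes the difficulty, and this is sound for coverability from $\contractin$ because a configuration whose $\Sigma$ is not a legal body is the image of no rule and therefore never occurs along an $\lredTickP{}$-computation started from $\contractin$, whose continuation is $\zero$. The clock is controlled by the same analysis: only \rulename{Tick-Plus} alters it, and backwards it strictly decreases towards $0$, so the iterated computation of predecessors stays within the finite range $[0,\time_t]$ and terminates.
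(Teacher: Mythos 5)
Your proof is correct and follows essentially the same route as the paper's: the two decidability claims are discharged by routine inspection, and the pred-basis is obtained by reading each rule backwards against a single target and collecting finitely many minimal predecessors. In fact, your version is tighter than the paper's own argument in three places. First, in the \rulename{State-Change} reversal the paper only considers function bodies $W$ occurring as sub-multisets of the target's $\Psi$, whereas your truncated difference $\Psi_t \ominus \Psi_1$ also yields the predecessors whose produced body overlaps $\Psi_t$ only partially or adds events absent from it; these are genuine predecessors of $\uparrow\contract$ that the paper's case analysis drops. Second, for targets of the form $\C(\Q' \semi \zero \semi \zero)$ the paper attributes all predecessors to \rulename{Tick-Plus}, overlooking \rulename{State-Change} predecessors such as $\C(\Q \semi \zero \tostate \Q' \semi \zero)$ arising from an event or an empty-bodied function ending in $\Q'$; your organization by rule rather than by shape of the target covers these automatically. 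Third, you make explicit that $\Sigma$ must range over the finitely many legal continuation shapes and justify this by unreachability of the others from $\contractin$; without that restriction the \rulename{State-Change} reversal produces an infinite $\preceq$-antichain of pairwise incomparable predecessors (one per candidate body $\Psi_1$), so no finite basis would exist --- a point the paper passes over silently, and on which your observation that determinacy makes the premise $\noredbis{\SemEvent, \Q}$ vacuous (hence the predecessor sets upward closed) also does real work. The only blemishes are cosmetic: the premise of \rulename{Tick-Plus} is $\Q \notin \InitEv(\C)$ rather than the $\noredbis{\SemEvent, \Q}$ you cite (your vacuity argument applies either way), and the clock bookkeeping is unnecessary, since configurations of the transition system $\TStp(\C)$ do not carry the clock value.
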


From Lemma~\ref{lem.finitepredbasis} and the above mentioned results, on well-structured transition systems 
we get the following result.
\begin{theorem}
\label{thm.decidableDIP}
The state reachability problem is decidable in {\miniStipulaDIP}.
\end{theorem}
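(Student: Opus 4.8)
The plan is to derive this result directly from the well-structured transition system machinery established in Lemma~\ref{lem.wellstructuredTP} and Lemma~\ref{lem.finitepredbasis}, combined with the reduction of state reachability to coverability noted above. By Lemma~\ref{lem.wellstructuredTP}, $(\mathcal{C}_\C, \lredTickP{}, \preceq)$ is a well-structured transition system, so the generic backward-reachability (saturation) algorithm for coverability applies, provided its effectiveness hypotheses hold; these are exactly the three conditions supplied by Lemma~\ref{lem.finitepredbasis}, namely that $\lredTickP{}$ and $\preceq$ are decidable and that a finite basis of $\Pred(\uparrow \mathcal{D})$ is computable for every finite $\mathcal{D}$. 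Since the observation preceding Lemma~\ref{lem.wellstructuredTP} equates state reachability of $\Q$ with coverability of the target configuration $\C(\Q \semi \zero \semi \zero)$ from the initial configuration $\contractin$, decidability of coverability yields the theorem.

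Concretely, I would instantiate the standard saturation procedure. Starting from the upward-closed set generated by the target, iterate
\[
\mathcal{I}_0 = {\uparrow}\,\C(\Q \semi \zero \semi \zero), \qquad
\mathcal{I}_{n+1} = \mathcal{I}_n \cup {\uparrow}\,\Pred(\mathcal{I}_n) .
\]
Each $\mathcal{I}_n$ is upward-closed and is represented by a finite basis: the initial basis is the singleton $\{\C(\Q \semi \zero \semi \zero)\}$, and Lemma~\ref{lem.finitepredbasis} lets us compute a finite basis for the new predecessors at each step. Because $\preceq$ is a well-quasi-ordering, the ascending chain $\mathcal{I}_0 \subseteq \mathcal{I}_1 \subseteq \cdots$ of upward-closed sets satisfies the ascending chain condition and hence stabilizes after finitely many steps, say at $\mathcal{I}_N$. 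This $\mathcal{I}_N$ is precisely the set of configurations from which the target can be covered, so it remains to decide membership $\contractin \in \mathcal{I}_N$, which is effective since $\preceq$ is decidable and $\mathcal{I}_N$ is given by a finite basis. If $\contractin \in \mathcal{I}_N$ then $\Q$ is reachable, otherwise it is not.

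For correctness of the reduction I would check that covering $\C(\Q \semi \zero \semi \zero)$ is exactly state reachability: by definition of $\preceq$ the state and the $\Sigma$-component are kept fixed and only the pending-event multiset may grow, and $\zero \preceq \Psi'$ holds for every $\Psi'$; hence a configuration $\contract'$ with $\C(\Q \semi \zero \semi \zero) \preceq \contract'$ is necessarily of the form $\C(\Q \semi \zero \semi \Psi')$, matching Definition~\ref{def:reachability}. Honestly, almost all the difficulty of this theorem is already discharged in Lemmas~\ref{lem.wellstructuredTP} and~\ref{lem.finitepredbasis}; the theorem itself is essentially an assembly step. The main point one must be careful about is verifying that \emph{all} the applicability conditions of the generic coverability algorithm are genuinely met (decidability of $\lredTickP{}$ and $\preceq$, and computability of the $\Pred$-basis) and that the termination argument rests squarely on the well-quasi-ordering rather than on any finiteness of the configuration space, since the pending-event multisets are unbounded. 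Finally, combining this with the consequence of Proposition~\ref{prop.correspondence} that reachability agrees between {\miniStipulaDI} and {\miniStipulaDIP} transfers the decidability result back to {\miniStipulaDI}.
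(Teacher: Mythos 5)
Your proposal is correct and follows essentially the same route as the paper: the paper derives the theorem directly from Lemma~\ref{lem.wellstructuredTP}, Lemma~\ref{lem.finitepredbasis}, the equivalence of state reachability with coverability of $\C(\Q \semi \zero \semi \zero)$, and the cited generic decidability result for coverability in well-structured transition systems. Your explicit unfolding of the backward saturation algorithm and the check that covering the target configuration coincides with Definition~\ref{def:reachability} merely spell out what the paper leaves to the citation.
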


As a direct consequence of Proposition~\ref{prop.correspondence}  and Theorem~\ref{thm.decidableDIP} we have:

\begin{corollary}
The state reachability problem is decidable in {\miniStipulaDI}.
\end{corollary}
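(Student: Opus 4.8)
The plan is to obtain the corollary by combining Theorem~\ref{thm.decidableDIP} with the reachability equivalence anticipated right after Proposition~\ref{prop.correspondence}: on any contract $\C$ the semantics $\lred{}$ and $\lredTickP{}$ reach exactly the same states. Since a {\miniStipulaDI} contract and a {\miniStipulaDIP} contract are syntactically identical, it suffices to prove that a state $\Q$ is reachable in $\TS(\C)$ if and only if it is reachable in $\TStp(\C)$; the decision procedure for {\miniStipulaDI} then merely runs the procedure of Theorem~\ref{thm.decidableDIP} on the same contract and target state. Both directions follow by inspecting an arbitrary computation configuration by configuration through the three cases of Proposition~\ref{prop.correspondence}.

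First I would settle the easy direction, from {\miniStipulaDIP} to {\miniStipulaDI}. Take a computation $\contractin \lredTickP{}^* \pairbis{\C(\Q \semi \zero \semi \Psi)}{\time'}$ and consider any of its non-final configurations $\contract'$. Since $\contract'$ has a $\lredTickP{}$-successor, Proposition~\ref{prop.correspondence}(iii) excludes that $\contract'$ falls under case (iii) -- there such configurations have no $\lredTickP{}$-transition -- so $\contract'$ satisfies case (i) or (ii), and by those items its step is also a valid $\lred{}$ transition. Concatenating the steps yields $\contractin \lred{}^* \pairbis{\C(\Q \semi \zero \semi \Psi)}{\time'}$, so $\Q$ is reachable in {\miniStipulaDI}.

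The forward direction, from {\miniStipulaDI} to {\miniStipulaDIP}, is where the real work lies and I expect it to be the main obstacle. Fix a computation $\contractin \lred{}^* \contract_m = \pairbis{\C(\Q \semi \zero \semi \Psi)}{\time'}$ and distinguish two cases. If none of the configurations preceding $\contract_m$ is \emph{stuck}, then each satisfies case (i) or (ii) of Proposition~\ref{prop.correspondence} (a case-(iii) configuration is stuck), hence every step is also a $\lredTickP{}$ step and the whole computation transfers verbatim to {\miniStipulaDIP}. Otherwise let $\contract_j$ be the \emph{first} stuck configuration: by Proposition~\ref{prop.correspondence}(iii) it satisfies $\Q_j \in \InitEv(\C)$, $\Sigma_j = \zero$ and $\noredbis{\Psi_j, \Q_j}$, and the prefix $\contractin \lred{}^* \contract_j$ uses only cases (i)/(ii), so it is a legal {\miniStipulaDIP} computation reaching $\contract_j$. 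The crucial observation is that from $\contract_j$ onwards the computation is stuck and thus performs only \rulename{Tick} steps, and \rulename{Tick} never modifies the state component; hence the state $\Q$ of $\contract_m$ coincides with the state $\Q_j$ of $\contract_j$. The truncated computation $\contractin \lredTickP{}^* \contract_j$ therefore already reaches $\Q$, and its target has $\Sigma = \zero$ as required by Definition~\ref{def:reachability}.

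The delicate point to pin down is exactly this truncation step: the \rulename{Tick} transitions that {\miniStipulaDI} allows but {\miniStipulaDIP} forbids must never be responsible for reaching a genuinely new state. This rests on the invariant that \rulename{Tick} leaves the state component unchanged, together with the determinate-instantaneous discipline, which guarantees that a configuration with $\Q \in \InitEv(\C)$ and no fireable event admits neither a function invocation (in {\miniStipulaD} the initial states of functions and events are disjoint) nor any event firing, so it can only loop on its current state. Once the equivalence is in place, the corollary is immediate from Theorem~\ref{thm.decidableDIP}.
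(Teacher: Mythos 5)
Your proposal is correct and takes essentially the same route as the paper: the paper obtains the corollary directly from Proposition~\ref{prop.correspondence} (noting, right after it, that a state is reachable in {\miniStipulaDI} iff it is reachable in {\miniStipulaDIP}) combined with Theorem~\ref{thm.decidableDIP}, and your argument simply spells out that implicit equivalence by the same case analysis on (i)/(ii)/(iii). One cosmetic remark: Proposition~\ref{prop.correspondence}(iii) does not literally state that every stuck configuration satisfies $\Q \in \InitEv(\C)$, $\Sigma = \zero$ and $\noredbis{\Psi,\Q}$, but your truncation step never actually needs this converse --- it only needs that a stuck configuration is followed exclusively by \rulename{Tick} transitions (hence the state is preserved and $\Sigma = \zero$), which is immediate from the definition of stuck.
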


\section{Related work}
\label{sec:related}

The decidability of problems about infinite-state systems has been largely addressed 
in the literature. We refer 
to~\cite{Abdulla1996} for an overview of the research area. 

It turns out that critical features of {\miniStipula}, such as garbage-collecting elapsed events or
preempting events with respect to functions and progression of time may be modelled
by variants of Petri nets with inhibitor and reset arcs. While standard Petri nets, which are
infinite-state systems, have decidable problems of reachability, coverability, boundedness, etc.
(see, e.g.,~\cite{EsparzaN94}), the above variants of Petri nets are Turing complete, therefore
all non-trivial properties become undecidable~\cite{DufourdFS98}. 
%
%
%

It is not obvious whether Petri nets with inhibitor and reset arcs may be modelled by {\miniStipula}
contracts. It seems that these features have different expressive powers than time progression and events.
Hence, other formalisms, such as pi-calculus and actor languages, might have a stricter correspondence 
with {\miniStipula}. As regards the decidability of problems in pi-calculus, we recall 
the decidability of reachability and
termination in the depth-bounded 
fragment of the pi-calculus~\cite{meyer08} and
the decidability of the reachability problem for various fragments of the asynchronous pi-calculus 
that feature name generation, name mobility, and unbounded control~\cite{Amadio2002}.
Regarding actor languages, in~\cite{Boer2014}, we demonstrated the decidability of termination 
for stateless actors 
(actors without fields) and for actors with states when the number of actors is bounded and the state 
is read-only. It is worth to observe that all these results have been achieved by using techniques 
that are similar to those used
in this paper: either demonstrating that the model of the calculus 
is a well-structured transition system~\cite{Finkel:2001} (for which, under certain computability conditions,
the reachability and termination problems are decidable, see Section~\ref{sec:det_instantaneous})
or simulating a Turing complete model, such as the Minsky machines, into the calculus under analysis
(hence the undecidability results of problems such as termination).

The {\miniStipula} calculus has some similarities with formal models of timed systems such as timed automata \cite{AD94}. The control state reachability problem, namely, given a timed automaton $A$ and a control state $q$, does there exist a run of $A$ that visits $q$, is known to be decidable \cite{AD94}. A similar result holds for Timed Networks (TN) \cite{abdulla-timed-01,abdulla-model-03}, a formal model consisting of a family of timed automata
with a distinct {\em controller} defined as a finite-state automaton without clocks.
Each process in a TN can communicate with all other processes via rendezvous messages. 
Control state reachability is also decidable for Timed Networks with transfer actions \cite{TAHN15}.
A transfer action forces all processes in a given state to move to a successor state as transfer arcs in Petri nets, which allows to move all tokens contained in a certain place to another \cite{finkel-monotonic-02}.
{\miniStipula} can also be seen as a language for modelling asynchronous programs in which callbacks are scheduled using timers. Verification problems for formal models of (untimed) asynchronous programs have been considered in \cite{SV06,JM07}.
In this context, Boolean program execution is modeled using a pushdown automaton, while asynchronous calls are modeled by adding a multiset of pending callbacks to the global state of the program. Callbacks are only executed when the program stack is empty.  Verification of safety properties is decidable for this model via a non-trivial reduction to coverability of Petri nets \cite{GM12}.

%

\section{Conclusions}
\label{sec:conclusions}

We have systematically studied the computational power of {\miniStipula}, a basic calculus defining 
legal contracts. The calculus is stateful and features clauses that may be either 
functions to be invoked by the external environment or
events that can be executed at certain time slots. We have demonstrated that in several 
legally relevant fragments of {\miniStipula} a problem such as
reachability of state is undecidable. The decidable fragment, {\miniStipulaDI},
 is the one whose event and functions start 
in disjoint states and where events are instantaneous (the time expressions are 0).

We conclude by indicating some relevant line
for future research. 
First of all, the decidability result for {\miniStipulaDI}
leaves open the question about the complexity of the
state reachability problem in that fragment. Our current conjecture is that
the problem is EXPSPACE-complete and we plan to prove this
conjecture by reducing the coverability problem for Petri nets
into the state reachability problem for {\miniStipulaDI}.
Another interesting line of research regards the investigation
of sound, but incomplete, algorithms for checking state reachability
in {\miniStipula}. A preliminary algorithm was investigated 
in~\cite{Laneve04}. The presented algorithm
spots clauses that are unreachable in {\miniStipula}
contracts and is
not tailored to any particular fragment of the language.
The results in this paper show that this algorithm 
may be improved
to achieve completeness when the input contract
complies with the {\miniStipulaDI} constraints.


\bibliography{bibliography}

\newpage

\appendix
\section{Legal relevance of {\miniStipula} and its fragments}
\label{sec:Section2}

To illustrate the relevance in a legal context of the fragments
of {\miniStipula} identified in the Introduction, 
we refer to simple legal contracts and employ a subset of operations from the full {\Stipula} language. 
We begin with a legal contract for bike rentals; it consists 
of three Articles:
\begin{enumerate}
\item
This Agreement shall commence when the Borrower takes possession of the Bike and remain in full 
force and
effect until the Bike is returned to Lender. The Borrower shall return the Bike {\tt k}
hours after the
rental and will pay Euro {\tt cost} in advance where half of the amount is of surcharge for late return.
\item
Payment.
Borrower shall pay the amount specified in Article 1 when this agreement commences. 
\item
End. If the Bike is returned within the agreed time specified in Article 1, then half of the 
{\tt cost} will be returned to Borrower; the other half is given to the Lender. Otherwise the
full {\tt cost} is given to the Lender.
\end{enumerate}
These clauses are transposed into {\miniStipula} using a few extensions to the calculus
-- operations that are almost standard and will be briefly discussed below.  


{\scriptsize
\begin{lstlisting}[language=C,mathescape=true,basicstyle=\ttfamily\scriptsize,firstnumber=1,numbers=left,numbersep=5pt,numberstyle=\tiny\color{teal},commentstyle=\color{olive},caption={The bike-rental contract},captionpos=b,label={lst:BikeRental}]
stipula Bike_Rental {  
  assets wallet, bike
  init @Inactive
  @Inactive Lender : offer[b] {
    b $\lolli$ bike  // the bike access code is stored in the contract
  } $\tostate$ @Payment
  @Payment Borrower : pay[x] 
    (x == cost) {
      x $\lolli$ wallet  // the contract keeps the money 
      bike $\lolli$ Borrower // the Borrower keeps the bike access code
      now + k $\event$ @Using {
          wallet $\lolli$ Lender // deadline expired: the whole wallet to Lender
      } $\tostate$ @End
  } $\tostate$ @Using
  @Using Borrower : end { // bike returned before the deadline
    0.5 $\times$ wallet $\lolli$ Lender // half wallet to Lender
    wallet $\lolli$ Borrower // half wallet to Borrower
  } $\tostate$ @End
}
\end{lstlisting}
}

The contract is defined by the keyword {\tt stipula} and is initially in the
state specified by the clause {\tt init}. This contract has two \emph{assets}: 
{\tt wallet} and {\tt bike} that 
will contain money and the access code of a bike, respectively. The contract starts when 
the Lender offers an access code {\tt b} of a bike (line 4) that is stored 
in the asset {\tt bike} -- the operation ${\tt b \lolli bike}$ at line 5 (this is Article 1 in the legal contract). Then the Borrower
can pay -- the function at line 7 -- which amounts to store in the {\tt wallet} the 
payment {\tt x} that has been made
(this is Article 2). At the same time the bike code is sent to the Borrower (the 
operation ${\tt bike \lolli Borrower}$ at line 10), so the Borrower can use the bike,
and event is scheduled at time {\tt now + k} 
(lines 11-13): if the bike is not returned at that time, the Borrower has to pay the
whole amount in the {\tt wallet} as specified in Article 3. The function at lines 15-18
defines the timely return of the bike by the Borrower (Article 3; notice that the
operation {\tt  0.5 $\times$ wallet $\lolli$ Lender} halves the content of the 
{\tt wallet}; therefore the following operation ${\tt wallet \lolli Borrower}$ sends to 
the Borrower the remaining half). 

A substantial problem in the
legal contract domain is spotting normative clauses, either  
functions or 
events, that will be never applied. In fact, it is possible, that an ureachable clause 
is considered too oppressive by one of the parties (take, for instance, 
the event of the foregoing Article 3), thus making the legal relationship fail.
To address this problem, we have defined {\miniStipula}, a sub-calculus of {\Stipula}
 that allows us to focus on the control flow of a contract. For example, you get a
{\miniStipula} contract by dropping all the operations $\lolli$, the parameters of
the functions and the asset declarations from Listing~\ref{lst:BikeRental}.
In particular, the resulting contract belongs to the fragment {\miniStipulaTA}
because {\tt k} (the renting time) is positive. 

There are other fragments of {\miniStipula} that are also relevant. Consider the 
following alternative bike rental contract, in which the Borrower may keep the 
bike indefinitely, but becomes eligible for a discount on the price if the bike is returned.
\begin{enumerate}
\item
This Agreement shall commence when the Borrower takes possession of the Bike and 
remain in full force and effect until the Bike is returned to Lender. 
The Borrower shall return the Bike at his discretion
and will pay Euro {\tt cost} in advance.
\item
Payment.
Borrower shall pay the amount {\tt cost} as specified in Article 1 when this agreement commences. 
\item
End. If the Bike is returned then 90\% of {\tt cost} will be given to Lender and 10\% of 
{\tt cost} will be returned to Borrower.
\item
Problems and resolutions. If the Lender detects a problem, he may trigger a resolution process that is managed by 
the Authority. The Authority will immediately enforce resolution to either the Lender or the Borrower.
\end{enumerate}
(To keep things simple, we have omitted the details of Article 4.) The alternative
contract is similar to that of Listing~\ref{lst:BikeRental} up-to line 7. 

{\scriptsize
\begin{lstlisting}[language=C,mathescape=true,basicstyle=\ttfamily\scriptsize,firstnumber=1,numbers=left,numbersep=5pt,numberstyle=\tiny\color{teal},commentstyle=\color{olive},caption={The alternative bike-rental contract},captionpos=b,label={lst:altBikeRental}]
stipula alt_Bike_Rental {  
  assets wallet, bike
  init @Inactive
  @Inactive Lender : offer[b] {
    b $\lolli$ bike  // the bike access code is stored in the contract
  } $\tostate$ @Payment
  @Payment Borrower : pay[x] 
    (x >= cost) {
      x $\lolli$ wallet  // the contract keeps the money 
      bike $\lolli$ Borrower // the Borrower keeps the bike access code
  } $\tostate$ @Using
  @Using Borrower : end { // bike returned 
    0.9 $\times$ wallet $\lolli$ Lender // 90$\%$ wallet to Lender
    wallet $\lolli$ Borrower // 10$\%$ wallet to Borrower
  } $\tostate$ @End
  @Using Lender : problems(u) { 
  	// communicate the problems u to Authority that will manage the issue
  } $\tostate$ @Manage
  @Manage Authority : resolution (v) { 
    if (v == 0) {
       now $\event$ @Problem {  
          // immediately enforce resolution to the Lender
       } $\tostate$ @PbLender 
    } else { // there is no management to do
       now $\event$ @Problem { // immediately enforce resolution to the Borrower
       } $\tostate$ @PbBorrower 
    } 
  } $\tostate$ @Problem
\end{lstlisting}
}

The key lines of Listing~\ref{lst:altBikeRental} are 21 and 25 where the Authority
enforces an immediate resolution in favour of the Lender of the Borrower. In this case, the events are now: it turns out that the contract belongs to the fragment {\miniStipulaI}.
Actually, it also belongs to the fragment {\miniStipulaD} because the initial 
state {\tt Problem} of the two events is different from {\tt Inactive}, {\tt Payment}, 
{\tt Using} and {\tt Manage}, which are the initial state of functions. As such, 
the contract of Listing~\ref{lst:altBikeRental} belongs to the fragment {\miniStipulaDI}.

\section{Proofs}
\label{sec:proofs}

\subsection{Proofs of Section \ref{sec:undecidability}}

\Reviewer{1}{In proofs, the notation \_ is confusing as \_ is often used as wildcard in the sense that the value does not matter.
}
\Answer{
The symbol \_ is overloaded but the context removes the ambiguities as defined in Section~\ref{sec:operationalsemantics}.
}
\thmintantaneous*

\begin{proof}
Let $M$ be a Minsky machine with states $\{\Q_0, \cdots , \Q_n\}$ where $\Q_0$ is the initial state.
Let ${\tt I}_M$ be the encoding in {\miniStipulaI}
as defined in Table~\ref{tab.encodingI} and let $\S, \S', \cdots$ range over states of ${\tt I}_M$.
After a sequence of transitions \rulename{Tick}, the contract ${\tt I}_M$ performs the
transitions
\[
{\tt I}_M({\tt Start}, \zero,\zero) \lred{{\tt fstart}} 
{\tt I}_M({\tt Start}, {\it Ev} \tostate \Q_0 ,\zero) \lred{} {\tt I}_M(\Q_0, \zero,{\it Ev})
\]
where ${\it Ev} = 0 \; \event \; {\tt aQ}_0 \tostate 
{\tt bQ}_0$. Next, let 
\[
\sem{v_1,v_2}_\Q^{\tt I} = \underbrace{0 \, \event \, \mathtt{dec}_1 \tostate {\tt ackdec}_1}_{v_1 \text{ times}}
\; | \; \underbrace{0 \, \event \, \mathtt{dec}_2 \tostate {\tt ackdec}_2}_{v_2 \text{ times}}
\; | \; 0 \, \event \, {\tt aQ} \tostate {\tt bQ} \, .
\]
We demonstrate that
\begin{enumerate}
\item[(1)]
$
(\Q_i, v_1, v_2) \lred{}_{\tt M} (\Q_j, v_1', v_2')\ 
\text{implies}\\ 
{\tt I}_M(\Q_i, \zero,\sem{v_1,v_2}_{\Q_i}^{\tt I}) \lred{}^*
{\tt I}_M(\Q_j, \zero,\sem{v_1',v_2'}_{\Q_j}^{\tt I}) 
$

\item[(2)]
${\tt I}_M$ does not transit to unreachable Minsky's states.\\ 
That is, if ${\tt I}_M(\Q_i, \zero,\sem{v_1,v_2}_{\Q_i}^{\tt I}) \lred{}^* {\tt I}_M(\S, \Sigma, \Psi) \lred{} {\tt I}_M(\Q_j, \Sigma', \Psi')$
with $\S \notin \{ \Q_0, \cdots , \Q_n \}$ then $\Sigma' = \zero$ and $\Psi' = \sem{v_1', v_2'}_{\Q_j}^{\tt I}$ and
$(\Q_i, v_1,v_2) \lred{}_{\tt M}^* (\Q_j, v_1', v_2')$.

\end{enumerate}

The proof of (1) is a case analysis on the type of Minsky instructions. 
We omit the simulation of the $\Inc$ instruction, which is simple, and discuss in detail
the simulation of $\DecJump$.
There are two cases (we assume the register is $R_1$):
\begin{description}
\item[$R_1 > 0$:] 
therefore $M$ performs the transition 
$(\Q_i, v_1, v_2) \lred{}_{\tt M} (\Q_j, v_1-1, v_2)$. Correspondingly, in ${\tt I}_M$,
an invocation of the function {\tt fdecQ$_i$} occurs. Then the contract performs the transitions
\[
\begin{array}{rl}
{\tt I}_M(\Q_i, \zero,\sem{v_1,v_2}_{\Q_i}^{\tt I}) \lred{{\tt fdecQ_i}} &
{\tt I}_M(\Q_i, {\it E}_1 | {\it E}_2 | {\it E}_3 \tostate {\tt dec}_1,\sem{v_1,v_2}_{\Q_i}^{\tt I})
\\
\lred{} &
{\tt I}_M({\tt dec}_1, \zero ,\sem{v_1,v_2}_{\Q_i}^{\tt I} | {\it E}_1 | {\it E}_2 | {\it E}_3 )
\end{array}
\]
where ${\it E_1}\! =\! 0 \, \event \, {\tt ackdec}_1 \tostate {\tt aQ_i}$,
${\it E_2}\! =\! 0 \, \event \, {\tt bQ_i} \tostate \Q_j$, and 
${\it E_3}\! =\! 0 \, \event \, {\tt aQ_j} \tostate {\tt bQ_j}$.
Since $v_1 > 0$, there is at least one event 
$0 \, \event \,  \withat{\tt dec}_1 \tostate {\tt ackdec}_1
$ in $\sem{v_1,v_2}_{\Q_i}^{\tt I}$. Then
{\small
\[
\begin{array}{rl}
{\tt I}_M({\tt dec}_1, \zero ,\sem{v_1,v_2}_{\Q_i}^{\tt I} | {\it E}_1 | {\it E}_2 | {\it E}_3 )
\lred{} &
{\tt I}_M({\tt dec}_1, \zero \tostate {\tt ackdec}_1,\sem{v_1\!\!-\!\!1,v_2}_{\Q_i}^{\tt I} | {\it E}_1 | {\it E}_2 | {\it E}_3) 
\\
\lred{} &
{\tt I}_M({\tt ackdec}_1, \zero, \sem{v_1\!\!-\!\!1,v_2}_{\Q_i}^{\tt I} | {\it E}_1 | {\it E}_2 | {\it E}_3 )
\\
\lred{} &
{\tt I}_M({\tt ackdec}_1, \zero \tostate {\tt aQ_i}, \sem{v_1\!\!-\!\!1,v_2}_{\Q_i}^{\tt I} | {\it E}_2 | {\it E}_3 )
\\
\lred{} &
{\tt I}_M({\tt aQ_i}, \zero , \sem{v_1\!\!-\!\!1,v_2}_{\Q_i}^{\tt I} | {\it E}_2 | {\it E}_3 )
\\
\lred{} &
{\tt I}_M({\tt aQ_i}, \zero \tostate {\tt bQ_i} , \sem{v_1\!\!-\!\!1,v_2}_{\Q_j}^{\tt I} | {\it E}_2 )
\\
\lred{} &
{\tt I}_M({\tt bQ_i}, \zero , \sem{v_1\!\!-\!\!1,v_2}_{\Q_j}^{\tt I} | {\it E}_2 )
\\
\lred{} &
{\tt I}_M({\tt bQ_i}, \zero \tostate \Q_j , \sem{v_1\!\!-\!\!1,v_2}_{\Q_j}^{\tt I} )
\\
\lred{} &
{\tt I}_M(\Q_j, \zero, \sem{v_1\!\!-\!\!1,v_2}_{\Q_j}^{\tt I} )
\end{array}
\]
}
\item[$R_1 = 0$:]
therefore $M$ performs the transition 
$(\Q_i, 0, v_2) \lred{}_{\tt M} (\Q_j, 0, v_2)$. Correspondingly, in ${\tt I}_M$,
an invocation of the function {\tt fzeroQ$_i$} occurs. Then the contract performs the transitions
\[
\begin{array}{rl}
{\tt I}_M(\Q_i, \zero,\sem{0,v_2}_{\Q_i}^{\tt I}) \lred{{\tt fzeroQ_i}} &
{\tt I}_M(\Q_i, {\it E}_1' | {\it E}_2 | {\it E}_3 \tostate {\tt dec}_1,\sem{v_1,v_2}_{\Q_i}^{\tt I} )
\\
\lred{} &
{\tt I}_M({\tt dec}_1, \zero ,\sem{0,v_2}_{\Q_i}^{\tt I} | {\it E}_1' | {\it E}_2 | {\it E}_3 )
\end{array}
\]
where ${\it E}_1' = 0 \, \event \, {\tt zero}_1 \tostate {\tt aQ_i}$ (${\it E_2}$ and 
${\it E_3}$ are as above).
Since $v_1 = 0$, there is no event 
$0 \, \event \,  \withat{\tt dec}_1 \tostate {\tt ackdec}_1$ in $\sem{v_1,v_2}_{\Q_i}^{\tt I}$. Then
it is possible to invoke {\tt fdec1}: 
\[
\begin{array}{rl}
{\tt I}_M({\tt dec}_1, \zero ,\sem{0,v_2}_{\Q_i}^{\tt I} | {\it E}_1' | {\it E}_2 | {\it E}_3 )
\lred{{\tt fdec1}} &
{\tt I}_M({\tt dec}_1, \zero \tostate {\tt zero}_1,\sem{0,v_2}_{\Q_i}^{\tt I} | {\it E}_1' | {\it E}_2 | {\it E}_3) 
\\
\lred{} &
{\tt I}_M({\tt zero}_1, \zero ,\sem{0,v_2}_\Q^{\tt I} | {\it E}_1' | {\it E}_2 | {\it E}_3) 
\\
\lred{} &
{\tt I}_M({\tt zero}_1, \zero \tostate {\tt aQ_i} ,\sem{0,v_2}_{\Q_i}^{\tt I} | {\it E}_2 | {\it E}_3)
\\
\lred{} &
{\tt I}_M({\tt aQ_i}, \zero ,\sem{0,v_2}_{\Q_i}^{\tt I} | {\it E}_2 | {\it E}_3)
\\
\lred{}^* & \text{\tt // similarly to the above case}
\\
\lred{} & {\tt I}_M(\Q_j, \zero, \sem{0,v_2}_{\Q_j}^{\tt I} )
\end{array}
\]
\end{description}


As regards (2), we assume that 
${\tt I}_M(\Q_i, \zero,\sem{v_1,v_2}_{\Q}^{\tt I}) \lred{}^* {\tt I}_M(\S, \Sigma, \Psi) \lred{} {\tt I}_M(\Q_j, \Sigma', \Psi')$ is such that  ${\tt I}_M(\Q_i, \zero,\sem{v_1,v_2}_{\Q}^{\tt I}) \lred{}^* {\tt I}_M(\S, \Sigma, \Psi)$ does not
contain pairs of transitions ${\tt I}_M(\S'', \Sigma'', \Psi'') \lred{} {\tt I}_M(\Q'', \Sigma''', \Psi''')$
such that $\S'' \notin \{\Q_0, \cdots , \Q_n\}$ and $\Q'' \in \{\Q_0, \cdots , \Q_n\}$. The general 
case follows by recurrence.

The proof is a case analysis on the first transition of ${\tt I}_M(\Q_i, \zero, \sem{v_1,v_2}_{\Q_i}^{\tt I})$:
\begin{itemize}

\item
if ${\tt I}_M(\Q_i, \zero,\sem{v_1,v_2}_\Q^{\tt I}) \lred{} {\tt I}_M(\Q_i, \zero,\zero)$ is an instance of \rulename{Tick}.
This case is vacuous: the continuations of ${\tt I}_M(\Q_i, \zero,\zero)$ may be either instances of 
\rulename{Tick} or the invocation of exactly one function ${\tt f} \in \{ {\tt fincQ_i}, {\tt fdecQ_i}, {\tt fzeroQ_i}
\}$ possibly followed by an invocation of either ${\tt fdec}_1$ or ${\tt fdec}_2$ and by the execution of 
the event $0 \event {\tt zero}_1 \tostate {\tt aQ_i}$ or $0 \event {\tt zero}_2 \tostate {\tt aQ_i}$. Therefore 
no transition ${\tt I}_M(\S, \Sigma, \Psi) \lred{} {\tt I}_M(\Q_j, \Sigma', \Psi')$ will be ever possible.

\item
if ${\tt I}_M(\Q_i, \zero,\sem{v_1,v_2}_{\Q_i}^{\tt I}) \lred{{\tt fincQ_i}} {\tt I}_M(\Q_i, {\it Ev} \tostate 
{\tt aQ_i},\sem{v_1,v_2}_{\Q_i}^{\tt I} )$ then, according to Table~\ref{tab.encodingI}, $M$ contains the instruction 
$\Q_i : \Inc(R_1, \Q_j)$ or $\Q_i : \Inc(R_2, \Q_j)$.
We consider the first case, i.e., $R_1$ is incremented
(the argument for $R_2$ is similar). 
It is easy to verify that
${\tt I}_M(\Q_i, {\it Ev} \tostate {\tt aQ_i},\sem{v_1,v_2}_{\Q_i}^{\tt I} ) \lred{}^5
{\tt I}_M(\Q_j, \zero,\sem{v_1+1,v_2}_{\Q_j}^{\tt I})$
with a deterministic computation that does not traverse configurations 
${\tt I}_M(S'', \Sigma'', \Psi'')$ with $S'' \in \{\Q_0, \cdots , \Q_n\}$.

\item
if ${\tt I}_M(\Q_i, \zero,\sem{v_1,v_2}_{\Q_i}^{\tt I}) \lred{{\tt fdecQ_i}} {\tt I}_M(\Q_i, {\it Ev} \tostate 
{\tt aQ_i},\sem{v_1,v_2}_{\Q_i}^{\tt I} )$ then, according to Table~\ref{tab.encodingI}, $M$ contains 
$\Q_i : \DecJump(R_1, \Q', \Q'')$ or $\Q_i : \DecJump(R_2, \Q', \Q'')$.
We consider the first case, i.e., $R_1$ is considered
(the argument for $R_2$ is similar).
There are two subcases:
\begin{description}
\item[$(v_1 > 0) \; $]  
In this case it is easy to verify that
${\tt I}_M(\Q_i, {\it Ev} \tostate {\tt aQ_i},\sem{v_1,v_2}_{\Q_i}^{\tt I} ) \lred{}^8
{\tt I}_M(\Q'', \zero, \sem{v_1-1,v_2}_{\Q''}^{\tt I})$
with a computation that does not traverse configurations 
${\tt I}_M(S'', \Sigma'', \Psi'')$ with $S'' \in \{\Q_0, \cdots , \Q_n\}$.
In this case, the computation is not deterministic because the
third transition is the invocation of ${\tt fdec}_1$. 
Actually, it is also possible
to perform a transition \rulename{Tick}. Analogous to the first case
above, after the transition \rulename{Tick}, 
no transition ${\tt I}_M(\S, \Sigma, \Psi) \lred{} {\tt I}_M(\Q_j, \Sigma', \Psi')$ will be ever possible.
%
%
%
%

\item[$(v_1 = 0 ) \; $]
This case is vacuous because we have
${\tt I}_M(\Q_i, {\it Ev} \tostate 
{\tt dec}_1,\sem{0,v_2}_{\Q_i}^{\tt I} )$ $\lred{}^2$ ${\tt I}_M({\tt dec}_1,  \zero ,\sem{0,v_2}_{\Q_i}^{\tt I}  ~|~ {\it Ev})$ 
and, for every 
${\tt I}_M({\tt dec}_1 , \zero ,\sem{0,v_2}_{\Q_i}^{\tt I}  ~|~ {\it Ev})$ $\lred{}^* {\tt I}_M(\S, \Sigma,\Psi)$, 
it is easy to verify that $\S \notin 
\{\Q_0, \cdots , \Q_n\}$. 
\end{description}

\item
if ${\tt I}_M(\Q_i, \zero,\sem{v_1,v_2}_{\Q_i}^{\tt I}) \lred{{\tt fzeroQ_i}} {\tt I}_M(\Q_i, {\it Ev} \tostate 
{\tt dec}_1,\sem{v_1,v_2}_{\Q_i}^{\tt I} )$  then, according to Table~\ref{tab.encodingI}, $M$ contains 
$\Q_i : \DecJump(R_1, \Q', \Q'')$ or $\Q_i : \DecJump(R_2, \Q', \Q'')$.
We consider the first case, i.e., $R_1$ is considered
(the argument for $R_2$ is similar).
There are two subcases:
\begin{description}
\item[$(v_1 = 0) \; $]  
We have that 
${\tt I}_M(\Q_i, {\it Ev} \tostate 
{\tt dec}_1,\sem{v_1,v_2}_{\Q_i}^{\tt I} )
 \lred{}^9$ ${\tt I}_M(\Q', \zero,\sem{0,v_2}_{\Q'}^{\tt I})$. As in the case of {\tt fdecQ$_i$}, the computation is not deterministic due to the
presence of alternative \rulename{Tick} transitions, but after a \rulename{Tick} 
no transition ${\tt I}_M(\S, \Sigma, \Psi) \lred{} {\tt I}_M(\Q_j, \Sigma', \Psi')$ will be ever possible.


\item[$(v_1 > 0) \; $]
Vacuous, analogous to the case of {\tt fdecQ$_i$} with $v_1 = 0$.
\end{description}
\end{itemize}

\Reviewer{1}{model $I_M$ could contain the last paragraph from the start so the connection is clearer; it is more like an afterthought}
\Answer{We have not understood the comment. May the Reviewer be clearer?}
To conclude, given a Minsky machine $M$ with a final state $\Q_F$, 
let ${\tt I}_M$ be the corresponding
{\miniStipulaI} contract according to Table~\ref{tab.encodingI}. Let also ${\tt I}_M'$  be the
contract ${\tt I}_M$ extended with a clause $\clause{\Q_F}{{\tt f}}{\Q_F'}$, where $\Q_F'$ is 
a new state.
Since the reachability of $(\Q_F, v_1, v_2)$, for every $v_1$, $v_2$, is undecidable for Minsky machines,
because of the foregoing properties 
we derive that the reachability of $\clause{\Q_F}{{\tt f}}{\Q_F'}$ is also undecidable in 
{\miniStipulaI} (similarly for events).
\qed
\end{proof}

\thmtimeahead*

\begin{proof}
Let $M$ be the Minsky machine with states $\{\Q_1, \cdots, \Q_n\}$ 
and ${\tt TA}_M$ be the encoding in {\miniStipulaTA}
as defined in Table~\ref{tab.encodingTA}. 
Let 
\[
\sem{v_1,v_2}_\Q^{\tt TA} = \underbrace{1 \, \event \, \mathtt{dec}_1 \tostate {\tt ackdec}_1}_{v_1 \text{ times}}
\; | \; \underbrace{1 \, \event \,  \mathtt{dec}_2 \tostate {\tt ackdec}_2}_{v_2 \text{ times}}
\; | \; 1 \event \Q \tostate {\tt end} \; .
\]
and let $\Psi_{\tt TA}$, $\Psi_{\tt TA}'$ be either $\zero$ or ($i$ is either 1 or 2)
\[
\begin{array}{l}
0 \event {\tt dec}_1 \tostate {\tt end} \, | \, 0 \event {\tt dec}_2 \tostate {\tt end}
\, | \, 0 \event {\tt nextQb}_i \tostate {\tt end}
\, | \, 0 \event {\tt ackdec}_1 \tostate \withat{\tt end}
\\
| \, 0 \event \, {\tt ackdec}_2 \tostate {\tt end} \; .
\end{array}
\]
(the cases where $\Psi$ and $\Psi'$ are not $\zero$ correspond to those where the machine transition 
is a $\DecJump$).
We need to prove
\begin{enumerate}
\item[(1)]
$(\Q_i, v_1, v_2) \lred{}_{\tt M} (\Q_j, v_1', v_2')$
implies\\
${\tt TA}_M(\Q_i, \zero,\sem{v_1,v_2}_{\Q_i}^{\tt TA} \, | \, \Psi) \lred{}^*
{\tt TA}_M(\Q_j, \zero,\sem{v_1',v_2'}_{\Q_j}^{\tt TA} \, | \, \Psi')$ 


\item[(2)]
${\tt TA}_M$ does not transit to unreachable Minsky's states.\\ 
That is, if ${\tt TA}_M(\Q_i, \zero,\sem{v_1,v_2}_{\Q_i} ~|~ \Psi_{\tt TA}) \lred{}^* {\tt TA}_M(\S, \Sigma, \Psi) 
\lred{} {\tt TA}_M(\Q_j, \Sigma', \Psi')$
with $\S \notin \{ \Q_0, \cdots , \Q_n \}$ then $\Sigma' = \zero$ and $\Psi' = \sem{v_1', v_2'}_{\Q_j}^{\tt TA}~|~ 
\Psi_{\tt TA}'$ and also
$(\Q_i, v_1,v_2) \lred{}_{\tt M}^* (\Q_j, v_1', v_2')$. 
\end{enumerate}

The proofs of (1) and (2) consist of a detailed case analysis on the transitions of ${\tt TA}_M$.
The proof of (1) is similar to the corresponding one of Theorem~\ref{thm.instantaneous}, therefore omitted. As
regards (2), the difference with that of Theorem~\ref{thm.instantaneous} is the fact 
that ${\tt TA}_M(\Q_i, \zero,\sem{v_1,v_2}_{\Q_i}^{\tt TA} ~|~ \Psi_{\tt TA}) \lred{}^* {\tt TA}_M(\S, \Sigma, \Psi) 
\lred{} {\tt TA}_M(\Q_j, \Sigma', \Psi')$ contains exactly 3 transitions \rulename{Tick} (in the basic case
where no state $\{ \Q_0, \cdots , \Q_n \}$
is traversed). Otherwise one has to demonstrate that the case becomes vacuous.
The details are omitted. In the same way we may conclude the proof.
\qed
\end{proof}

\thmdeterminate*

\begin{proof}
Let $M$ be the Minsky machine and ${\tt D}_M$ be the encoding in {\miniStipulaD}
as defined in Tables~\ref{tab.encodingD} and~\ref{tab.encodingD_addendum}. 
After a sequence of transitions \rulename{Tick}, the contract ${\tt D}_M$ performs the
transitions
\[
{\tt D}_M({\tt Start}, \zero,\zero) \lred{{\tt fstart}} 
{\tt D}_M({\tt Start}, {\it Ev} \tostate \Q_0 ,\zero) \lred{} {\tt D}_M(\Q_0, \zero,{\it Ev})
\]
where $\Q_0$ is the initial state of $M$ and ${\it Ev} = 0 \event {\tt notickA} \tostate {\tt cont}$. Next, let 
\[
\sem{v_1,v_2}_\X^{\tt D} = \underbrace{1 \, \event \, \mathtt{dec}_1 \tostate {\tt ackdec}_1}_{v_1 \text{ times}}
\; | \; \underbrace{3 \, \event \,  \mathtt{dec}_2 \tostate {\tt ackdec}_2}_{v_2 \text{ times}}
\; | \; 0 \event {\tt notickX} \tostate {\tt cont} 
\]
where ${\tt X}$ may be either $\A$ or $\B$.
We need to prove
\begin{enumerate}
\item[(1)]
$
(\Q_i, v_1, v_2) \lred{}_{\tt M} (\Q_j, v_1', v_2') \ \
\text{implies} \\
\quad {\tt D}_M(\Q_i, \zero,\sem{v_1,v_2}_{\tt X}^{\tt D} ) \lred{}^*
{\tt D}_M(\Q_j, \zero,\sem{v_1',v_2'}_{\tt Y}^{\tt D} ) 
$
where {\tt X} and {\tt Y} are either {\tt A} or {\tt B};

\item[(2)]

${\tt D}_M$ does not transit to unreachable Minsky's states. \\
That is, if ${\tt D}_M(\Q_i, \zero,\sem{v_1,v_2}_{\X}^{\tt D} ) \lred{}^* {\tt D}_M(\S, \Sigma, \Psi) 
\lred{} {\tt D}_M(\Q_j, \Sigma', \Psi')$,
with $\S \notin \{ \Q_0, \cdots , \Q_n \}$, then $\Sigma' = \zero$ and $\Psi' = 
\sem{v_1', v_2'}_{{\tt Y}}^{\tt D}$ and
$(\Q_i, v_1,v_2) \lred{}_{\tt M}^* (\Q_j, v_1', v_2')$. 

\end{enumerate}

%
%
Like Theorems~\ref{thm.instantaneous} and~\ref{thm.timeahead}, the proofs of (1) and (2) are 
a case analysis on the possible transitions of ${\tt D}_M$. They are 
omitted because  
similar to the foregoing ones. We may therefore conclude the proof.
\qed
\end{proof}

\subsection{Proofs of Section \ref{sec:det_instantaneous}}

\propcorrespondence*

\begin{proof}
The proofs of statements (\emph{i}) and (\emph{ii}) are case analyses on the possible transition 
rules that can be applied. They are straightforward and omitted. 
As regards (\emph{iii}),
we notice that no rule \rulename{Function} can be applied because $\Q \in \InitEv(\C)$, hence $\Q$
cannot be  an initial state of function in {\miniStipulaDI} and {\miniStipulaDIP}. \rulename{State-Change}
cannot be applied because $\Sigma = \zero$ and \rulename{Event-Match} cannot be applied because 
$\noredbis{\SemEvent, \Q}$. Therefore, the unique rule that can be applied is \rulename{Tick} and we have
\[
\pairbis{\C(\Q \semi \zero \semi \Psi)}{\time} \lred{} \pairbis{\C(\Q \semi \zero \semi \zero)}{\time}
\lred{}^* \pairbis{\C(\Q \semi \zero \semi \zero)}{\time}
\]
where $\pairbis{\C(\Q \semi \zero \semi \zero)}{\time}
\lred{}^* \pairbis{\C(\Q \semi \zero \semi \zero)}{\time}$ are instances of \rulename{Tick}, therefore 
$\C(\Q \semi \zero \semi \Psi)$ is stuck.
Correspondingly, since \rulename{Tick} is replaced by \rulename{Tick-Plus} in {\miniStipulaDIP},
no transition $\lredTickP{}$ is possible. Hence $\pairbis{\C(\Q \semi \Sigma \semi \Psi)}{\time} \nolredTickP{}$.
\qed
\end{proof}

\lemwellstructuredTP*

\begin{proof}
To verify that $(\mathcal{C}, \lredTickP{}, \preceq)$ 
is well-structured we need to demonstrate the properties (1) and (2) of Definition~\ref{def.wsts}:

\medskip

\noindent
\emph{(1) $\preceq$ is a well-quasi ordering}.
$\Psi$ are multisets of events and the relation $\preceq$ is multiset inclusion.
Furthermore, by definitions the set of all possible events is always finite.
Therefore, by Dickson's Lemma \cite{Dickson}, $\preceq$ is a well-quasi ordering.

\medskip

\noindent
\emph{(2) $\preceq$ is upward compatible with $\lredTickP{}$}.
We demonstrate that, if $\C(\Q,\Sigma,\Psi) \lredTickP{} \C(\Q',\Sigma',\Psi')$ and $\C(\Q,\Sigma,\Psi) \preceq
\C(\Q,\Sigma,\Psi'')$ then there is $\Psi'''$ s.t. $\C(\Q,\Sigma,\Psi'') \lredTickP{} 
\C(\Q',\Sigma',\Psi''')$ with $\C(\Q,\Sigma,\Psi') \preceq \C(\Q,\Sigma,\Psi''')$. The proof consists of
a case analysis on the rule used in the transition $\C(\Q,\Sigma,\Psi) \lredTickP{} \C(\Q',\Sigma',\Psi')$.
When the rule is an instance of \rulename{Function} or \rulename{State-Change} the upward compatibility is
immediate. If the rule is an instance of \rulename{Tick-Plus} then $\Q \notin \InitEv(\C)$. Hence the
same rule may be applied to $\C(\Q,\Sigma,\Psi'')$; in this case $\Psi' = \zero = \Psi'''$
(because the events in $\Psi$ and $\Psi''$ have time expressions 0). When the rule is an instance of 
\rulename{Event-Match} then let $0 \event_{\tt n} \Q \tostate \Q''$ be the event in $\Psi$ that has 
been scheduled. Since $\Psi \preceq \Psi''$, the same event can be scheduled in $\C(\Q,\Sigma,\Psi'')$. 
By definition, we have $\Psi' \preceq \Psi'''$.
\qed
\end{proof}
\lemfinitepredbasis*
\begin{proof}
Regarding 1, the algorithm for $\preceq$ is trivial, as well as its decidability. The algorithm for 
$\lredTickP{}$ is defined by the operational semantics, henceforth its decidability.

\Reviewer{1}{the reference Higman's Lemma is confusing here (and the citation is wrong): usually, it refers to the fact that every infinite sequence of words has an infinite ascending subsequence.}
\Answer{Repaired}

Regarding 2, 
we know that since $\preceq$ is a well-quasi-ordering
the set $\uparrow \! \contract$ has a finite basis. Let $\mathcal{B}$ be such basis, 
we prove  that $\Pred(\mathcal{B})$ has a finite basis that is also computable.
Let $\contract' \in \mathcal{B}$; there are three cases:
\begin{description}
\item[$\contract' = \C(\Q', \Sigma, \Psi) \text{ and } \Sigma \neq \zero$:]
In this case, 
the transition $\contract'' \lredTickP{} \contract'$ con be obtained in two
possibile ways: 
by applying \rulename{function} 
or 
\rulename{Event-Match}. In the first case,
$\Pred(\mathcal{B})$ contains all the tuples $\C(\Q', \zero, \Psi)$
for which there is a
function $\Qwithat' \;\f\, \{\, W\,\} \,\tostate\, \Qwithat''  \in \C$
that can produce $\Sigma$; in the second case,
$\Pred(\mathcal{B})$ contains all the tuples $\C(\Q', \zero, \Psi')$
where $\Psi'$ extends $\Psi$ with an 
event $\mathtt{0} \event 
\Qwithat' \tostate \Qwithat'' \in \C$ that can produce $\Sigma$.
In both cases, $\Pred(\mathcal{B})$ is finite and computable. 

\item[$\contract' = \C(\Q', \zero, \Psi) \text{ and } \Psi \neq \zero$:]
The unique rule that gives a transition $\contract'' \lredTickP{} \contract'$ is \rulename{State-Change}. 
In this case we have
$\contract'' = \C(\Q, W \tostate \Q', \Psi')$ where 
$W = \bigl( \mathtt{0} \event_{\! \n_i} \Q_i \,\tostate \Q_i' \bigr)^{i \in 1..h}$
and $\Psi = \Psi' | \mathtt{0} \event_{\! \n_1} \Q_1 \,\tostate \Q_1' | \cdots | 
\mathtt{0} \event_{\! \n_h} \Q_h \,\tostate \Q_h'$. There are two possible cases for 
obtaining the terms $W \tostate \Q'$:
(\emph{i}) the functions $\Qwithat \;\f\, \{\, W\,\} \,\tostate\, \Qwithat' \in \C$,
for every possible subterm $W$ of $\Psi$; and  (\emph{ii}) the events $\mathtt{0} \event 
\Qwithat \tostate \Qwithat' \in \C$ (in this sub-case, $W = \zero$). In both cases,
$\Pred(\mathcal{B})$ is finite and computable.

\item[$\contract' = \C(\Q', \zero, \zero)$:]
Then the transition $\contract'' \lredTickP{} \contract'$ is an instance of \rulename{Tick-Plus}.
$\contract''$ could be any tuple of type $\C(\Q', \zero, \Psi')$.
$\{ \C(\Q', \zero, \zero) \}$ is a finite basis for all such tuples.
\qed
\end{description}
\end{proof}

\end{document}